\documentclass[a4paper]{article}
\usepackage[utf8]{inputenc}
\usepackage{amsmath}
\usepackage{amssymb}
\usepackage{amsfonts}
\usepackage{amsthm}
\usepackage{graphicx}
\usepackage{color}
\usepackage{cite}
\usepackage{tikz}
\usepackage{commands-tam}
\usepackage{subcaption}
\usepackage{xspace}
\newcommand\ignore[1]{}
\definecolor{cola}{RGB}{255,128,9}\definecolor{colb}{RGB}{0,136,218}\definecolor{colc}{RGB}{0,205,96}\definecolor{cold}{RGB}{255,192,22}\definecolor{cole}{RGB}{113,1,184}
\newtheorem{theorem}{Theorem}[section]

\newtheorem{lemma}[theorem]{Lemma}

\newtheorem{definition}[theorem]{Definition}

\newcommand\resp{respectively\xspace}

\newcommand\vect[1]{\vec{#1}}

\newenvironment{reptheorem}[1]{
  \hfill

  \noindent{\bf Theorem~\ref{#1}.\hspace*{0.5em}}\it
}{\hfill}

\author{
  Florent Becker%
     \thanks{
       Univ. d'Orléans, INSA Centre Val de Loire, LIFO EA 4022, FR-45067 Orléans, France,
       \href{mailto:florent.becker@univ-orleans.fr}{florent.becker@univ-orleans.fr}.
       Supported in part by ANR Grant / Avec le soutien du programme ANR ``QuasiCool''.
     }
  \and
  Pierre-\'Etienne Meunier\thanks{Department of Computer Science, Aalto University, Helsinki, Finland, \href{mailto:pierre-etienne.meunier@aalto.fi}{pierre-etienne.meunier@aalto.fi}. This research was supported in part by National Science Foundation Grant CCF-1219274.}
  }

\title{It's a Tough Nanoworld: in Tile Assembly, Cooperation is not (strictly) more Powerful than Competition}
\date{}
\begin{document}
\maketitle

\begin{abstract}

We present a strict separation between the class of ``mismatch free'' self-assembly systems and general aTAM systems. Mismatch free systems are those systems in which concurrently grown parts must always agree with each other.

Tile self-assembly is a model of the formation of crystal growth, in which a large number of particles concurrently and selectively stick to each other, forming complex shapes and structures. It is useful in nanotechnologies, and more generally in the understanding of these processes, ubiquitous in natural systems.

The other property of the local assembly process known to change the power of the model is cooperation between two tiles to attach another. We show that disagreement (\emph{mismatches}) and cooperation are incomparable: neither can be used to simulate the other one.

The fact that mismatches are a hard property is especially surprising, since no known, explicit construction of a computational device in tile assembly uses mismatches, except for the very recent construction (FOCS 2012) of an intrinsically universal tileset, i.e. a tileset capable of simulating any other tileset up to rescaling. This work shows how to use intrinsic universality in a systematic way to highlight the essence of different \emph{features} of a model.

Moreover, even the most recent experimental realizations do not use competition, which, in view of our results, suggests that a large part of the natural phenomena behind DNA self-assembly remains to be understood experimentally.

\end{abstract}

\section{Introduction}

The formidable biological diversity of living organisms is due for a large part to the amazing richness of the computational processes at the core of natural processes, at the molecular level. A largely studied molecular interaction paradigm is \emph{cooperation} between molecules, in fields such as proteomics~\cite{pmid10189717}. However, in tile self-assembly -- a computing model inspired by such processes -- \emph{competition} seems to be at least as important as cooperation. In this paper, we investigate the relations between these two interaction paradigms, and conclude that both are necessary for self-assembly to retain its full computational power.

Tile self-assembly is a model of molecular growth introduced by Winfree~\cite{Winf98} to study the molecular engineering possibilities offered by the components designed by Seeman~\cite{Seem82} using DNA. This model studies the autonomous assembly of independent atomic components, in an unsupervised way.  Using these ideas and their later developments, researchers have been able to build a number of structures, ranging from regular arrays~\cite{WinLiuWenSee98} to fractal structures~\cite{RoPaWi04,FujHarParWinMur07}, smiling faces~\cite{RothOrigami,wei2012complex}, DNA tweezers~\cite{yurke2000dna}, logic circuits~\cite{seelig2006enzyme,qian2011scaling}, neural networks~\cite{qian2011neural}, or molecular robots\cite{DNARobotNature2010}. These examples not only demonstrate our ability to craft nano-things in bottom-up processes, as opposed to traditional top-down crafting processes; they also provide a strong link between the theoretical study of tile self-assembly, and actual processes occurring in nature, as observed in wet-lab experiments.

In tile self-assembly, we consider the assembly of non-flippable, non-rotatable square tiles, with a \emph{glue color} and an integer \emph{glue strength} on each side, to an existing assembly. We consider a finite amount of \emph{tile types}, with an infinite supply of each type. The dynamics starts from a \emph{seed} assembly and proceeds asynchronously and non-deterministically, one tile at a time. At each step, a tile can stick to an existing assembly if the sum of glue strengths with neighboring tiles matching its glue colors is at least a parameter of the model called the \emph{temperature}. This means that their may be mismatches between adjacent tiles; the only requirement is that there be enough glue strength for each tile to attach to the assembly.

This model is similar to Wang tilings~\cite{Wang61,Wang63,Berger65,Robinson71}, essentially augmented with a mechanism for sequential, asynchronous growth, and in which mismatches are allowed. Despite its simplicity, it is very expressive, as it can simulate the behavior of any Turing machine~\cite{RotWin00}, and produce arbitrary connected shapes with a number of tile types within a logarithmic factor of their size~\cite{SolWin05,Winfree98simulationsof}.

\paragraph{Intrinsic simulations} Our purpose in this work is to compare the dynamics of different models of assembly. A first objection to this, is that adding new tile types can change dramatically the computational power of our model: therefore, our comparisons would be not be very useful in the qualitative study of molecular processes if they consisted only of statements of the form \emph{``model $X$ can do something with one billion tile types, that model $Y$ cannot do with ten tile types''}.

Getting round this objection is the purpose of \emph{intrinsic simulations}, a notion of dynamic simulation \emph{up to rescaling}. This idea originated in cellular automata, and has given rise to a large literature~\cite{bulkingI,bulkingII,arrighi2012intrinsic,Ollinger08,goles-communicationcomplexity}, and has also been studied in Wang tiling~\cite{LafitteW07,LafitteW08,LafitteW09}. More recently, it has been adapted to tile assembly~\cite{USA,IUSA,Meunier-2014,woods2013intrinsic,2HAMIU}, the main difficulties of this adaptation being that tile assembly is \emph{asynchronous}, and that the geometry of assemblies plays an important role, since in most models, a tile that has grown cannot detach from the assembly nor change its type, contrarily to the case of cellular automata.

Intrinsic simulations shall not be seen as a way to bypass the difficulty of proving negative results about tile self-assembly: indeed, a major achievement of this line of research is the existence of an intrinsically universal tileset~\cite{IUSA}, i.e. a tileset that can simulate the behavior of any other modulo rescaling. This notion can therefore be used to separate different models or classes, by proving -- as done in \cite{Meunier-2014}, and as we do in the present work -- that a model cannot simulate arbitrary tile assembly systems, whereas arbitrary tile assembly systems can.
Intrinsic universality has also allowed to identify important challenges such as the existence an intrinsically universal tileset with a single (rotatable) polygonal tile~\cite{Demaine-2014}, giving new insights on the long-standing open problem of the existence of an aperiodic tiling of the plane, with a single (rotatable) tile.

Moreover, it has been shown that intrinsic universality is distinct from Turing universality: for instance, the three-dimensional variant of non-cooperative tile assembly can simulate Turing machines~\cite{Cook-2011}, but cannot intrinsically simulate arbitrary tile assembly systems~\cite{Meunier-2014}.

\paragraph{Mismatch-free and locally consistent systems}
An important class of tile assembly system, that has been extensively studied in the context of \emph{non-cooperative} (i.e. temperature 1) tile assembly, is the class of systems whose productions never have mismatches between adjacent tiles. It has been shown~\cite{Manuch-2010,Doty-2011} that these systems are only capable of assembling semi-periodic shapes. However, a long-standing open problem of tile assembly is the decidability of this condition for non-cooperative systems. Moreover, in the three-dimensional generalization of tile assembly, mismatches are an essential aspect of the computational capabilities of the non-cooperative model~\cite{Cook-2011}.

At temperature at least 2, the situation is quite different, since almost all known constructions, with the exception of the intrinsically universal tilesets that have been built~\cite{IUSA,Demaine-2014}, never make mismatches between adjacent tiles. Moreover, an intermediate result on the road to intrinsic universality was proven for the restricted class of \emph{locally consistent} tile assembly systems~\cite{USA}, that are mismatch-free systems in which it is additionally required that all tiles attach with the sum of glue strengths \emph{exactly} equal to the temperature, whereas in the regular model, tiles can also attach with strengths summing up to more than the temperature.

Moreover, as stated earlier in the introduction, most of the behaviors observed in natural systems are studied with \emph{cooperation} in mind; the question is therefore asked, of whether these assumptions (the mismatch-free property, and local consistency) really weaken the model. More precisely, the construction in~\cite{IUSA} of an intrinsically universal tileset relies crucially on mismatches: Sections 5.2.1, 5.3.1, 5.4.1 are typical examples where competition seems unavoidable.

\subsection{Main results}

Our results show the ``feature-optimality'' of these constructions, by proving that cooperation and redundancy \emph{are actually} necessary to retain the full power of the model. They also show the relevance of intrinsic universality, as opposed to only computational universality, to study natural systems: indeed, in tile assembly, computational universality can be achieved using a very weak subset of the model's features, and therefore fails to capture key elements of processes occurring in nature.

Finally, our results call for a deeper discussion with experimentalists: indeed, the most recent experimental implementations of computational processes using tiles~\cite{evans2014crystals} do not use competition nor glue redundancy. If these are really happening in nature, our work gives examples of phenomena allowed by tile assembly, that ought to be demonstrated experimentally to achieve a full understanding of crystal growth.

More precisely, the features we study are:
\begin{itemize}
\item cooperation, through increases of the \emph{temperature};
\item competition, through the presence of mismatches;
\item redundancy, through attachment strength exceeding the temperature;
\item and geometry, through the dimension of the space the assembly takes place in.
\end{itemize}

The two questions we aim to answer are, for each class, does it have a \emph{complete} system, i.e. a system that is intrinsically universal for systems that are in the class itself.

The following is known:
\begin{itemize}
\item non-planarity can sometimes replace cooperation: temperature $1$ systems in three dimensions can simulate planar temperature $2$ zig-zag systems, which can simulate arbitrary Turing machines~\cite{Cook-2011};
\item cooperation is a matter of getting the temperature to $2$, more is superfluous: there is an intrinsically universal tileset at temperature $2$, for any dimension, when mismatches and over-attachments are allowed \cite{IUSA};
\item cooperation cannot be simulated by temperature $1$ systems \cite{Meunier-2014};
\item \emph{locally consistent systems}, the ones that exhibit neither competition nor redundancy have a complete tileset, which works at temperature $2$.
\end{itemize}

We complete this study be showing the following:
\begin{theorem}
\label{thm:nomismatches}
Competition cannot be simulated by cooperation: there is a temperature $1$ system with mismatches which cannot be simulated by any system without mismatches. This remains true whether or not redundancy is allowed.
\end{theorem}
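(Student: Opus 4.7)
I would construct a specific temperature-$1$ system $T^\star$ that uses mismatches in an essential way and then argue that no mismatch-free system can simulate it. The system $T^\star$ is a \emph{racing arms} gadget: from a seed, two long, rigid arms grow toward a common ``collision window'' $W$, each arm carrying nondeterministic delay gadgets so that either arm may arrive first. The tiles in $W$ are arranged so that, depending on which arm wins the race, one of two distinct terminal configurations $P_0$ and $P_1$ forms inside $W$. Mismatches are essential for \emph{blocking} the losing arm: the winner places tiles whose glues are incompatible with the loser's incoming glues, so the loser stops at the boundary of $W$ although it still attached via a single strength-$1$ glue. Consequently, $P_0$ and $P_1$ are both producible in $T^\star$, coincide outside $W$, and can each be obtained via a production order that completes the exterior of $W$ before the interior of $W$ is touched.

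Now suppose a mismatch-free simulator $S$ of some temperature $\tau$ simulates $T^\star$ at scale $c$, with or without redundancy. Pulling back through the simulation, I obtain producible $S$-assemblies $\alpha_0$ and $\alpha_1$ that coincide on a common ``frame'' covering the exterior of $c \cdot W$, differ inside, and each admit a production order that completes the frame first. Since $S$ is mismatch-free, every tile attaching to the frame from inside $c \cdot W$ must be glue-compatible with the frame. I then \emph{splice} the two interiors: starting from the common frame and filling two disjoint halves of $c \cdot W$ from the interiors of $\alpha_0$ and $\alpha_1$ respectively, any contact between the two halves can only produce compatible glue pairs (mismatches simply do not exist anywhere in $S$). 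The resulting hybrid $\gamma$ is therefore itself producible in $S$; but under the simulation relation $\gamma$ must correspond to a $T^\star$-production that mixes the two race outcomes inside $W$, and no such production exists in $T^\star$, a contradiction.

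The main obstacle is making the splicing rigorous when $S$ is allowed redundancy, since tiles inside $c \cdot W$ could in principle attach with over-temperature support drawn from both halves of the window, creating long-range constraints that depend on the identities of both interiors simultaneously. I plan to neutralize this by designing $T^\star$ so that $W$ is surrounded by a rigid ``moat'' whose macrotile representation in $S$ must, by a separate determinism argument applied to the mismatch-free simulation of the moat itself, present a fixed set of glues on its inner face; this isolates each half of $c \cdot W$ from the other except through the moat. A secondary detail, handled by padding the arms with independent nondeterministic delay tiles, is to ensure that both arrival orders really are producible in $T^\star$, so that the spliced hybrid genuinely fails to lie in the $T^\star$-production set. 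The bookkeeping over the three subcases (no redundancy; redundancy at temperature~$1$; redundancy at higher temperature) is then uniform once the moat is in place.
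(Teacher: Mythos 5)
There is a genuine gap, and it sits exactly where the paper had to do its real work. Your pullback step assumes that the two outcomes of the race can be realized by simulator assemblies $\alpha_0,\alpha_1$ that \emph{coincide tile-for-tile} on a common frame outside $c\cdot W$, each with a production order completing the frame first. Nothing in the definition of simulation gives you this: the representation function $R^*$ only constrains the block-level image of a production, so two simulator productions representing the same $T^\star$-exterior may differ arbitrarily at the tile level (different macrotile encodings, different fuzz, different internal nondeterminism), and there is no guarantee of a frame-first ordering either. Even if you grant a common frame, the splice itself is not justified: the mismatch-free property is a property of \emph{producible} assemblies, so to invoke it on the hybrid you must first show the hybrid is producible, and at temperature $\tau\geq 2$ a tile of the $\alpha_1$-interior may have attached in $\alpha_1$ only by cooperating with tiles in the region you are now filling from $\alpha_0$. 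What you would actually need is agreement of the glue histories along the splice boundary (the ``window movies'' / diplomatic sets), and obtaining two cuts with identical such histories requires a counting/pumping argument that your proposal never makes. Your ``moat'' fix for redundancy has the same problem: a mismatch-free (even locally consistent) simulator need not be deterministic, so there is no a priori fixed glue pattern on the moat's inner face.

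This is precisely what Lemma~\ref{lem:bisimilarity} and the cutspace supply in the paper's proof of Theorem~\ref{thm:nomismatches}. Rather than splicing two different simulator productions, the paper takes a \emph{single} production of the simulator, embeds it into a two-sheeted graph (the cutspace) in which the two arms can grow past the crossing without interacting, and then uses the bisimilarity (pumping) lemma on the diplomatic sets along the cut-sets $E_k$ to show that the simulator cannot detect the modification and must keep extending both arms there; only at the very end is mismatch-freeness used, pointwise (Lemma~\ref{lem:reconcile}), to project the cutspace sequence back to $\mathbb{Z}^2$ and reconcile the overlapping attachments into a single valid sequence whose two arms cross --- an assembly representing no production of the simulated system. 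So your high-level intuition (mismatch-freeness lets concurrent branches ``pass through'' each other, yielding an impossible mixed outcome) matches the paper's, but the mechanism that makes it rigorous --- same-system pumping over translated cuts plus the cutspace, instead of cross-production splicing over an assumed common frame --- is missing, and without it the argument does not go through.
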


\begin{theorem}
\label{thm:local-vs-mismatch-free}
Redundancy cannot be simulated either: there is a temperature $1$ system with attachments exceeding the temperature which cannot be simulated at any temperature by \emph{locally consistent systems}, i.e. without redundancy or competition.
\end{theorem}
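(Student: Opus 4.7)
The plan is to exhibit a specific temperature $1$ tile assembly system $T$ that crucially exploits redundancy, and then argue by contradiction that no locally consistent system at any temperature can intrinsically simulate it. The witness $T$ will be built around a ``double-bond junction'' gadget, in which two growth paths issued from the seed converge at a common cell where a designated tile attaches via two matching glues, with total bond strength $2$, strictly exceeding the temperature $1$ of the system.

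Concretely, I would design $T$ so that one path grows east from the seed while another grows south then east, meeting at a distinguished corner cell. The glue design ensures that, while each incoming path alone would admit several tiles able to attach at the corner (each by a single bond, since $\tau=1$), only one specific tile has \emph{both} of its adjacent glues matching, and the intended final assembly places exactly that tile. Crucially, one can arrange several mutually incompatible versions of this gadget, obtained by varying the pair of incoming glues, so that the corner tile is genuinely determined by the combination of both inputs rather than by either one alone. No mismatch occurs anywhere in $T$; the only essential use of the model's features is the single redundant attachment at each corner.

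Next, suppose for contradiction that a locally consistent system $S$ at some temperature $\tau'$ intrinsically simulates $T$. Then $S$ must have a macrotile structure reproducing the dynamics of the junction, and the macrotile placed at the corner must depend on both incoming paths. I would then consider an asynchronous schedule in $S$ in which one of the two paths completes its growth into the corner region before the second path even reaches it. Because $S$ is mismatch-free and locally consistent, every tile attached during this first phase uses bond sum exactly $\tau'$, so its local neighborhood is informationally saturated at the moment of attachment: when the second path arrives later, it cannot contribute any extra disambiguating bond (that would exceed $\tau'$) and cannot disagree with what is already in place (that would be a mismatch).

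The main obstacle, and the heart of the argument, is to turn this informational saturation into a formal contradiction within the definition of intrinsic simulation. Following the path-based style used for the earlier temperature-$1$ separation, I would show that the full content of $S$ inside the corner region is already forced by the first arriving path, so that the corner macrotile is independent of the second path; this contradicts the fact that $T$ produces distinct final assemblies when the second path is replaced by one of the alternative variants of the gadget. Ruling out the escape route in which $S$ hides the disambiguation inside arbitrarily large macrotiles, by iterating the witness $T$ inside a rescaled construction and tracking the size of the macrotiles against the separation between the two incoming paths, will be the delicate technical step of the proof.
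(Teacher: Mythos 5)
There is a genuine gap, and it starts with the witness system itself. At temperature $1$ you cannot force the corner tile to ``wait'' for both incoming paths: any tile whose single abutting glue matches one arm attaches with strength $1\geq\tau$ as soon as that arm arrives, so the corner is legitimately resolvable from one input alone in $\mathcal{T}$ itself. Consequently your claim that $\mathcal{T}$ ``produces distinct final assemblies'' depending on the pair of inputs cannot hold, and your plan to have several singly-attachable corner tiles of which only one matches both arms is in tension with $\mathcal{T}$ being mismatch-free (the formal statement requires a mismatch-free witness, since the point is to separate local consistency from mere mismatch-freeness). With that, the contradiction you aim for evaporates: a simulator that resolves the corner from the first arriving path is simulating $\mathcal{T}$ correctly. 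The paper's witness is far simpler --- a single uniform tile with strength-$1$ glues on all sides --- whose only relevant feature is that tiles at junctions of two concurrently grown branches \emph{can} attach with excess glue.

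The second gap is the ``informational saturation'' step. Local consistency forbids excess strength and mismatches, but it does not forbid cooperation: a locally consistent simulator at temperature $\tau'\geq 2$ may leave the corner supertile region partially grown after the first path arrives and complete it later, cooperatively and with sums exactly $\tau'$, once the second path shows up --- this is precisely how the intrinsically universal construction for locally consistent systems combines inputs. So the content of the corner region is not forced by the first path, and no contradiction follows. The actual obstruction exploited in the paper is geometric rather than informational: using the cutspace embedding and the bisimilarity (pumping) lemma, one shows that the simulator must admit an assembly sequence in which both arms grow long and collide in $\mathbb{Z}^2$; at the first collision, the newly placed tile already has attachment strength at least $\tau'$ within its own arm and carries a positive-strength glue on the side facing the other arm, so it is either mismatched or over-attached. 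Your proposal defers exactly this step (``the delicate technical step'') without supplying the pumping tool that makes it work, and the gadget you chose does not create the race condition that the tool is needed to exploit.
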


\begin{theorem}
\label{thm:full-hotel}
There is an intrinsically universal tileset for the class of systems without mismatches in $\Z^3$ (the question remains open in $\mathbb{Z}^2$).
\end{theorem}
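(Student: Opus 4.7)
The plan is to take the intrinsically universal construction of~\cite{IUSA} as a starting point and, using the extra dimension, redesign each gadget that currently relies on mismatches so that the simulator itself is mismatch-free while still simulating arbitrary mismatch-free aTAM systems. Since the simulated system is mismatch-free, the simulator never needs to reproduce a simulated disagreement between two simulated tiles; the only reason mismatches appear in the planar universal tileset is to implement internal competition between candidate growth paths inside a macrotile, and this is precisely what we have the room to resolve geometrically in $\mathbb{Z}^3$.

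Each tile type of the simulated system $T$ is encoded by an $m\times m\times m$ macrotile of the simulator $U$, with its six faces carrying encoded versions of the simulated glues. The macrotile contains (i) six \emph{glue readers}, one per face, each a thin strip that records which simulated glue is exposed by an already-grown neighbour; (ii) a \emph{tile-type selector} that scans these readings and activates the circuit corresponding to the unique simulated tile type that can legally attach; and (iii) \emph{glue placers} that decorate the faces with the appropriate output glues so neighbouring macrotiles can grow in turn. In the planar universal tileset, step (ii) is a cascade of if-then-else decisions whose wrong branch is killed by an explicit mismatch. I replace each such branch by a pair of paths routed on two distinct $z$-layers; cooperation with the reading data enables only the winning path to initiate, while the losing path simply never nucleates, so no tile is ever placed adjacent to a tile carrying a disagreeing glue. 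The same third-dimensional freedom lets all internal wires pass over and under one another rather than crossing in the plane, removing the other known source of mismatches in the original construction.

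The main obstacle will be handling asynchrony: macrotiles grow concurrently, neighbours may arrive in any order, and the glue readers and selector must therefore be robust against arbitrary partial information. The standard solution carries over with care: each reader is initiated only by a dedicated seed glue deposited by a committed neighbour on its face, the selector collects readings as they arrive rather than in a fixed order, and every internal sub-assembly is made \emph{frontier-deterministic} in the sense of~\cite{USA}, so that reorderings of tile placements cannot create new boundary glues and in particular cannot create mismatches on the macrotile boundary. Once these properties are established, intrinsic universality follows by the now standard lifting argument: given any mismatch-free $T$, program $U$ by selecting one macrotile per simulated tile type together with the contents of its glue-placer and selector tables, and then verify that every producible assembly of $T$ lifts to a producible assembly of $U$ of the scaled shape, and conversely that every producible assembly of $U$ projects to a producible assembly of $T$. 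A final inspection of the finitely many local growth situations inside and across macrotiles confirms that the simulator itself never produces a mismatch, which is the extra content we need beyond the proof of~\cite{IUSA}.
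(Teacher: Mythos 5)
There is a genuine gap at the heart of your construction: your step (ii) assumes a ``tile-type selector that \ldots activates the circuit corresponding to the \emph{unique} simulated tile type that can legally attach,'' and your replacement for the planar competition gadgets is that ``only the winning path \ldots initiate[s], while the losing path simply never nucleates.'' Mismatch-free systems need not be deterministic: as in the paper's own motivating example, one candidate tile may be attachable via cooperation of the east and west glues while a different tile is attachable via the south glue alone, and Definition~\ref{def-s-models-t} forces your simulator to be able to realize \emph{each} of these next steps. So there is no ``unique'' legal type and no winner determined by the glue data; some genuinely nondeterministic commitment must be made inside the macrotile. Your proposal gives no mechanism for making that commitment consistently when the input faces arrive asynchronously: if the probe triggered by the east glue commits to one type and the probe triggered later by the south glue commits to another, both growths have already nucleated (neither can be retracted or blocked without competition), and they will eventually write disagreeing glue encodings on a shared face of the macrotile --- exactly the mismatch you are forbidden to make, or else an assembly that represents no production of $T$. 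Saying the internal sub-assemblies are ``frontier-deterministic'' does not address this, because the inconsistency comes from the simulated system's nondeterminism, not from reorderings of a single deterministic growth.

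The paper's proof resolves this with two ingredients that are absent from your proposal. First, instead of selecting one input-side combination, it simulates \emph{all} possible combinations of input sides simultaneously, in $16$ separate loci on a dedicated layer (more in the full 3D case), routed through two wireboard layers so that the concurrently active loci never interfere. Second, and crucially, a single nondeterministic choice of $\log|T|$ bits is made once, at a canonical location on a choice layer reached by a probe whose shape and tiling are \emph{independent of which glue triggered it}; every locus computes its output from these same shared bits, so whenever several loci are activated and write to the same output wire, the values are provably equal, and since the simulated system is mismatch-free the encodings written by neighbouring macrotiles also always agree. That is why no competition is ever needed: all races are ``won'' by everyone at once, in agreement. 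Your use of the third dimension to let wires pass over one another matches a peripheral part of the paper's construction, but without the simultaneously-active loci and the shared canonical choice, the central difficulty --- nondeterministic tile selection without mismatches under asynchronous input arrival --- remains unsolved in your argument.
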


The complexity landscape of self-assembly we get is described on Figure \ref{fig:landscape}.

\begin{figure}
  \centering
  \scalebox{0.6}{\begin{tikzpicture}[>=latex,line join=bevel,]
  \pgfsetlinewidth{1bp}
\pgfsetcolor{black}
  \draw [->] (223.91bp,94.975bp) .. controls (207.83bp,80.332bp) and (183.88bp,58.51bp)  .. (157.89bp,34.842bp);
  \definecolor{strokecol}{rgb}{0.0,0.0,0.0};
  \pgfsetstrokecolor{strokecol}
  \draw (241.5bp,65bp) node {simulates \cite{SolWin05,Cook-2011}};
  \draw [->] (142.32bp,281.83bp) .. controls (141.97bp,249.45bp) and (141.21bp,179.09bp)  .. (140.68bp,130.12bp);
  \draw (159.5bp,206bp) node {{\color{colc}{$\supset$}} \cite{Meunier-2014}};
  \draw [->] (285.65bp,120.61bp) .. controls (299.03bp,120.4bp) and (309.5bp,117.53bp)  .. (309.5bp,112bp) .. controls (309.5bp,107.99bp) and (303.98bp,105.37bp)  .. (285.65bp,103.39bp);
  \draw (321bp,112bp) node {\cite{USA}};
  \draw [->] (157.66bp,284.62bp) .. controls (173.14bp,269.94bp) and (197.4bp,246.93bp)  .. (222.83bp,222.82bp);
  \draw (251bp,253bp) node {$\supseteq$, \color{cole}{$\supset$ (Theorem~\ref{thm:nomismatches})}};
  \draw [->] (124.36bp,286.5bp) .. controls (115.22bp,280.03bp) and (104.07bp,271.85bp)  .. (94.5bp,264bp) .. controls (81.087bp,253bp) and (66.792bp,239.96bp)  .. (47.86bp,222.03bp);
  \draw (111.5bp,253bp) node {{\color{colc}{$\supset$}} \cite{Meunier-2014}};
  \draw [->] (61.307bp,214.56bp) .. controls (73.075bp,215.1bp) and (83bp,212.24bp)  .. (83bp,206bp) .. controls (83bp,201.71bp) and (78.309bp,199.02bp)  .. (61.307bp,197.44bp);
  \draw (94.5bp,206bp) node {\cite{Meunier-2014}};
  \draw [->] (49.852bp,190.22bp) .. controls (67.67bp,175.04bp) and (95.568bp,151.28bp)  .. (123.55bp,127.44bp);
  \draw (109bp,159bp) node {$\supseteq$};
  \draw [->] (239.88bp,187.7bp) .. controls (240.16bp,174.46bp) and (240.57bp,155.95bp)  .. (241.13bp,130.23bp);
  \draw (289bp,159bp) node {$\supseteq$, \color{cole}{${\supset}$ (Theorem~\ref{thm:local-vs-mismatch-free})}};
  \draw [->,dashed] (140.5bp,93.696bp) .. controls (140.5bp,80.46bp) and (140.5bp,61.947bp)  .. (140.5bp,36.227bp);
  \draw (148.5bp,65bp) node {?};
  \draw [->] (166.57bp,308.45bp) .. controls (177.74bp,309.38bp) and (187.5bp,306.56bp)  .. (187.5bp,300bp) .. controls (187.5bp,295.59bp) and (183.1bp,292.88bp)  .. (166.57bp,291.55bp);
  \draw (199bp,300bp) node {\cite{IUSA}};
  \draw [->] (27.895bp,188.06bp) .. controls (22.675bp,165.25bp) and (16.781bp,124.05bp)  .. (32.5bp,94bp) .. controls (46.572bp,67.095bp) and (75.176bp,47.902bp)  .. (108.45bp,31.321bp);
  \draw (65.5bp,112bp) node {simulates \cite{Cook-2011}};
  \draw [->] (278.25bp,214.62bp) .. controls (291.17bp,214.62bp) and (301.5bp,211.75bp)  .. (301.5bp,206bp) .. controls (301.5bp,201.87bp) and (296.16bp,199.22bp)  .. (278.25bp,197.38bp);
  \draw (352bp,206bp) node {\color{cole}{In 3D (Theorem~\ref{thm:full-hotel})}};
\begin{scope}
  \definecolor{strokecol}{rgb}{0.0,0.0,0.0};
  \pgfsetstrokecolor{strokecol}
  \draw (239.5bp,206bp) node {No mismatches};
\end{scope}
\begin{scope}
  \definecolor{strokecol}{rgb}{0.0,0.0,0.0};
  \pgfsetstrokecolor{strokecol}
  \draw (140.5bp,18bp) node {Turing machines};
\end{scope}
\begin{scope}
  \definecolor{strokecol}{rgb}{0.0,0.0,0.0};
  \pgfsetstrokecolor{strokecol}
  \draw (241.5bp,112bp) node {Locally consistent};
\end{scope}
\begin{scope}
  \definecolor{strokecol}{rgb}{0.0,0.0,0.0};
  \pgfsetstrokecolor{strokecol}
  \draw (32.5bp,206bp) node {3D, $\tau=1$};
\end{scope}
\begin{scope}
  \definecolor{strokecol}{rgb}{0.0,0.0,0.0};
  \pgfsetstrokecolor{strokecol}
  \draw (142.5bp,300bp) node {$\tau=2$};
\end{scope}
\begin{scope}
  \definecolor{strokecol}{rgb}{0.0,0.0,0.0};
  \pgfsetstrokecolor{strokecol}
  \draw (140.5bp,112bp) node {2D, $\tau=1$};
\end{scope}
\end{tikzpicture}}
  \caption{The landscape of behavior complexity in self-assembly. The comparisons in purple are the main results of this paper. Models with a loop have a complete tileset, i.e. one that is intrinsically universal for the class itself. The dashed arrow represents a conjectured comparison. Very few separation results between classes were previously known; on this diagram, they are written in green.}
  \label{fig:landscape}
\end{figure}
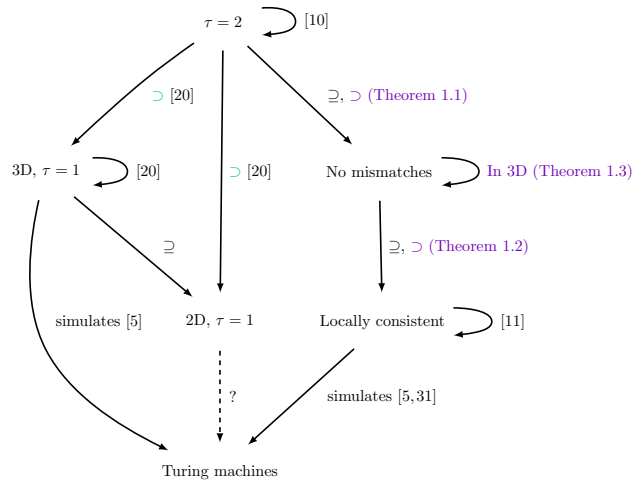

Known constructions such as found in \cite{IUSA} use competition a lot to ensure that choices are unique. Theorem \ref{thm:nomismatches} shows that this is in fact necessary, at least if the simulated system has mismatches. In other words, cooperation and competition are actually different phenomena which cannot simulate each other.

Theorem \ref{thm:full-hotel} shows that this tension between competition and cooperation is weaker in three-dimensional systems, as cooperation can be simulated without competition. This parallels the result of \cite{Patitz-2011} showing that in three dimensions, some aspects of cooperation such as Turing-completeness can be simulated using competition.

\subsection{Key ideas}

The main technical contributions of the paper are Lemma \ref{lem:bisimilarity} (called the ``\emph{bisimilarity lemma}''), and the cutspace idea found in theorems \ref{thm:nomismatches} and \ref{thm:local-vs-mismatch-free}. These are two powerful new tools for establishing negative results in self-assembly. There is a wealth of positive results (constructions) in self-assembly literature, but negative results are rare in comparison. One of the reasons for this is the difficulty of analyzing the dynamics of self-assembly systems, mainly because the intuitions from usual one-dimensional computing models (such as automata and Turing machines) are often false when more complex geometries are involved (even $\mathbb{Z}^2$).

\paragraph{The bisimilarity lemma} (Lemma \ref{lem:bisimilarity}) is a very powerful tool for exploring the limits of tile assembly dynamics. Fundamentally, it is a ``pumping lemma'' for tile assembly systems. When manipulating such systems, it comes quite naturally to see that long and narrow part of the assembly act as wires with limited capacity. A few proofs have relied on that characteristic to prove that some part of an assembly can not possibly get enough information to behave as wanted. Yet, these proofs were tied to the geometry of such assemblies and what information they bore. The bisimilarity lemma gives a more comprehensive account of these limitations: on such a wire, communication might be bidirectionnal and tile additions may bring information in either direction. On a ``wire'' of finite width, only a small number of ``communications'' can be performed between distinct parts of the assembly.

 The heavy lifting for this lemma lies not in its proof, but in the distinction between the \emph{diplomatic set} representing what can be seen of the assembly through some cut-set of the grid and the \emph{policy set} of what actually happens there. A simple analogy for this lemma is: in order to know what happens in the world outside Las Vegas, you only need to know when and in what state people going into Clark County will come back out. What actually happens in there is of no consequence to the outside world\footnote{More to the point, Vegas does not need an actual ``real world'' around it, it just needs cable TV in order to know how the world reacts to what it sees of Vegas.}.

This result is a generalization of a previous result called the \emph{window movie lemma}~\cite{Meunier-2014}: indeed, the window movie lemma allows to ``pause'' the dynamics of self-assembly at some step, ``move'' parts of the assembly around, and then resume the dynamics. The \emph{bisimilarity lemma} allows to repeat this operation: we can pause the dynamics, move parts by some vector $\vect{v}$, resume the dynamics for some time, and then move the parts back in place, with the current production. This operation can be repeated many times \emph{with the same translation vector $\vect{v}$}.

\paragraph{Mismatchless systems and the cutspace.} Once we have the right tool to bound the amount of information that is exchanged between different parts of the assembly, we also need a qualitative limit on what can be done without mismatches. Experimenting with such systems yields a first insight: without mismatches, concurrent assemblies ``pass through each other'', and each of them will act is if it were the first one to come at a given spot. This is most acute whenever we have the archetypical race condition: two lines growing concurrently towards some point, the first come blocking the second from passing through.

 To turn this intuition into a formal tool, we introduce the idea of \emph{cutspaces}. The idea is to consider a tile assembly system, and to change the graph on which the assembly takes place. By doing so, we can observe what happens when all concurrent sub-assemblies are ``\emph{simultaneously} the first'' to come at each race position. The key idea is to design a graph where each part of the assembly ``sees'' a different copy of the race position. Thus, they really all come first, in their own ``parallel universe'' where they have won the race. We then need to design a strategy to collapse these universes back into the same real spot and observe that without mismatches and without backdoor communication (ruled out through the bisimilarity lemma), they must all behave as they did on their own. Finally, all of them proceed at once as if they had won, contradicting the hypothesis that they are simulating a mismatch.

\paragraph{Three dimensional constructions.} Theorem \ref{thm:full-hotel}, an intrinsically universal tileset for mismatchless systems, relies on the fact that in non-planar grids, it is possible to wire up simulations without much limitation. In planar simulations of self assembly, it is crucial that either the set of inputs can be defined unambiguously, which rules out competition and redundancy in the simulated system, or that a race mechanism allows one of the possible sets of inputs to be selected as the winner, which requires competition in the simulator. In Theorem \ref{thm:full-hotel}, we get rid of this constraint by simulating (up to rescaling) all possible sets of input sides for a tile simultaneously.
Since we are simulating systems without mismatches, the representations of the possible outputs will always agree with neighboring tiles, even if these come afterwards.

\section{Definitions and preliminaries}

\label{definitions}
\subsection{Tile self-assembly}

We begin by defining the two-dimensional (\resp three-dimensional) abstract tile assembly model. A \emph{tile type} is a unit square or hexagon (\resp cube) with four or six (\resp six) sides, each consisting of a glue \emph{label} and a nonnegative integer \emph{strength}.
In the square grid of the plane, we call a tile's sides north, east, south, and west, respectively, according to the following picture:

We assume a finite set $T$ of tile types, but an infinite supply of copies of each type. An \emph{assembly} is a positioning of the tiles on a regular graph $G=(V,E)$ of degree four %
called the \emph{space graph}, that is, a partial function $\alpha:V\dashrightarrow T$.

In the case of the square %
grid in two dimensions, and of the cubic grid in the plane, $G$ is the Cayley graph of a commutative group, with generators $(1,0)$ and $(0,1)$ in the square grid of $\mathbb{Z}^2$, %
and $(1,0,0)$, $(0,1,0)$ and $(0,0,1)$ in the cubic grid of $\mathbb{Z}^3$. Therefore, in the rest of this paper, when there is no ambiguity, we will use the elements of the groups both as vectors and as points of the space. We will however use other kinds of graphs to formalize the idea of \emph{cutspaces}.

We say that two tiles in an assembly \emph{interact}, or are \emph{stably attached}, if the glue labels on their abutting side are equal, and have positive strength. An assembly $\alpha$ induces a weighted \emph{binding graph} $B_\alpha=(V_\alpha,E_\alpha)$, which is a subgraph of $G$, where $V_\alpha=\dom{\alpha}$, and there is an edge $(a,b)\in E_\alpha$ if and only if $a$ and $b$ interact, and this edge is weighted by the glue strength of that interaction.  The assembly is said to be $\tau$-stable if any cut of $G$ has weight at least $\tau$.

A \emph{tile assembly system} is a triple $\mathcal{T}=(T,\sigma,\tau)$, where $T$ is a finite tile set, $\sigma$ is called the \emph{seed}, and $\tau$ is the \emph{temperature}.

Given two $\tau$-stable assemblies $\alpha$ and $\beta$, we say that $\alpha$ is a \emph{subassembly} of $\beta$, and write $\alpha\sqsubseteq\beta$, if $\dom{\alpha}\subseteq \dom{\beta}$ and for all $p\in \dom{\alpha}$, $\alpha(p)=\beta(p)$.  We also write $\alpha\rightarrow_1^{\mathcal{T}}\beta$ if we can get $\beta$ from $\alpha$ by the binding of a single tile, that is, if $\alpha\sqsubseteq \beta$ and $|\dom{\beta}\setminus\dom{\alpha}|=1$. We note such a binding $t@z$ (read ``tile $t$ at position $z$''), where $\{z\} = \dom{\beta}\setminus\dom{\alpha}$, and $t=\beta(z)$.
 We say that $\gamma$ is \emph{producible} from $\alpha$, and write $\alpha\rightarrow^{\mathcal{T}}\gamma$ if there is a (possibly empty) sequence $\alpha=\alpha_1,\ldots,\alpha_n=\gamma$ such that $\alpha_1\rightarrow_1^{\mathcal{T}}\ldots\rightarrow_1^{\mathcal{T}}\alpha_n$.

For any $X\subseteq\mathbb{N}$, we write $(U_n)_{n\in X}$ for ``the sequence defined for all $i\in X$ by $U_i$'', and if $U$ is a finite sequence and $V$ is an arbitrary sequence, we alse write $U.V$ for the concatenation of the two sequences, i.e. the sequence $(W_n)_{n\geq 0}$ such that for all $i\in\mathbb{N}$, $i<|U|\Rightarrow W_i=U_i$, and $i\geq |U|\Rightarrow W_i=V_{i-|U|}$. Moreover, when it is clear from the context, we will write $U.x$ where $x$ is an element, here considered as a sequence of a single element.

A sequence of $k\in\mathbb{N} \cup \{\infty\}$ assemblies $\alpha_0,\alpha_1,\ldots$ over $\mathcal{A}^T$ is a \emph{$\mathcal{T}$-assembly sequence} if, for all $1 \leq i < k$, $\alpha_{i-1} \to_1^\mathcal{T} \alpha_{i}$. Alternatively, a $\mathcal{T}$-assembly sequence can be treated as a sequence of attachments $(t_i@z_i)_{i < k}$. The notation ``$t_i@z_i$'' means ``a tile of type $t_i$ attaches to a position $z_i$''.

The set of \emph{productions} of a tile assembly system $\mathcal{T}=(T,\sigma,\tau)$,
written $\prodasm{\mathcal{T}}$, is the set of all assemblies producible
from $\sigma$. An assembly $\alpha$ is called \emph{terminal} if there
is no $\beta$ such that $\alpha\rightarrow_1^{\mathcal{T}}\beta$. The
set of terminal assemblies is written $\termasm{\mathcal{T}}$.

For $A,B\in\mathbb{Z}^2$, we use the notation $\vect{AB}$ to mean ``the vector from $A$ to $B$''.

\subsection{Intrinsic simulations}

\label{sec:simulation_def}

These definitions are standard definitions of simulations in self-assembly. They are for a large part taken from~\cite{Meunier-2014}, and adapted to the space graph formalism used in this paper.

Let $T$ be a tile set, and let $m\in\N$. An \emph{$m$-block supertile} over $T$ is a partial function $\alpha : \Z_m^2 \dashrightarrow T$ (in the square grid), $\alpha:\mathbb{H}_m\dashrightarrow T$ (in the hexagonal grid), or $\alpha:\Z_m^3\dashrightarrow T$ (in the three-dimensional cubic grid), where $\Z_m = \{0,1,\ldots,m-1\}$ and $\mathbb{H}_m=\{a\vect{i}+b\vect{j}+c\vect{k} | a,b,c\in\{0,\ldots,m-1\}\}$.

Let $B^T_m$ be the set of all $m$-block supertiles over $T$.  The $m$-block with no domain is said to be $\emph{empty}$.  For a general assembly $\alpha:G \dashrightarrow T$ (where $G=\mathbb{Z}^2$ in the square grid, $G=\mathbb{H}$ in the hexagonal grid, and $G=\mathbb{Z}^3$ in the 3D cubic grid), and $\vect{v}\in G$, define $\alpha^m_{\vect{v}}$ to be the $m$-block supertile defined by $\alpha^m_{\vect{v}}(\vect{v'}) = \alpha(m\vect{v}+\vect{v'})$ for all $v'$ in $\Z_m^2$, $\mathbb{H}_m$ and $\Z_m^3$, respectively.

For some tile set $S$, a partial function $R: B^{S}_m \dashrightarrow T$ is said to be a \emph{valid $m$-block supertile representation} from $S$ to $T$ if for any $\alpha,\beta \in B^{S}_m$ such that $\alpha \sqsubseteq \beta$ and $\alpha \in \dom R$, then $R(\alpha) = R(\beta)$.

For a given valid $m$-block supertile representation function $R$ from tile set~$S$ to tile set $T$, define the \emph{assembly representation function}\footnote{Note that $R^*$ is a total function since every assembly of $S$ represents \emph{some} assembly of~$T$; the functions $R$ and $\alpha$ are partial to allow undefined points to represent empty space.}  $R^*: \mathcal{A}^{S} \rightarrow \mathcal{A}^T$ such that $R^*(\alpha') = \alpha$ if and only if $\alpha(A) = R\left(\alpha'^m_{A}\right)$ for all $A \in G$.  For an assembly $\alpha' \in \mathcal{A}^{S}$ such that $R(\alpha') = \alpha$, $\alpha'$ is said to map \emph{cleanly} to $\alpha \in \mathcal{A}^T$ under $R^*$ if for all non empty blocks $\alpha'^m_{A}$, $A+\vec{u} \in \dom \alpha$ for some $\vec{u}\in G$ such that $\|\vec{u}\|_1\leq 1$.

In other words, $\alpha'$ may have tiles on supertile blocks representing empty
space in $\alpha$, but only if that position is adjacent to a tile in $\alpha$.
We call such growth ``around the edges'' of $\alpha'$ \emph{fuzz} and thus
restrict it to be adjacent to only valid supertiles, but not diagonally adjacent
(i.e.\ we do not permit \emph{diagonal fuzz}).

By extension, the representation function from a cubic tileset $S$ to a planar square tileset $T$ is a regular $3D$ representation function where we consider the operations of tileset $T$ to happen in the $z=0$ plane.

In the following definitions, let $\mathcal{T} = \left(T,\sigma_T,\tau_T\right)$
be a tile assembly system, let $\mathcal{S} = \left(S,\sigma_S,\tau_S\right)$ be a
tile assembly system, and let $R$ be an $m$-block representation function $R:B^S_m
\rightarrow T$.

\begin{definition}
\label{def-equiv-prod} We say that $\mathcal{S}$ and $\mathcal{T}$ have
\emph{equivalent productions} (under $R$), and we write $\mathcal{S}
\Leftrightarrow \mathcal{T}$ if the following conditions hold:
\begin{enumerate}
\item $\left\{R^*(\alpha') | \alpha' \in \prodasm{\mathcal{S}}\right\} =
  \prodasm{\mathcal{T}}$.
\item $\left\{R^*(\alpha') | \alpha' \in \termasm{\mathcal{S}}\right\} = \termasm{\mathcal{T}}$.
\item For all $\alpha'\in \prodasm{\mathcal{S}}$, $\alpha'$ maps cleanly to $R^*(\alpha')$.
\end{enumerate}
\end{definition}

Another important condition that we require for simulation, is the equivalence of dynamics, via the two following definitions. The first one requires that whenever the simulator is capable of assembly a new supertile of some type $t$, the resulting production represents the assembly of a new tile in the simulated system:

\begin{definition}
\label{def-t-follows-s} We say that $\mathcal{T}$ \emph{follows}
$\mathcal{S}$ (under $R$), and we write $\mathcal{T} \dashv_R \mathcal{S}$ if
$\alpha' \rightarrow^\mathcal{S} \beta'$, for some $\alpha',\beta' \in
\prodasm{\mathcal{S}}$, implies that $R^*(\alpha') \to^\mathcal{T} R^*(\beta')$.
\end{definition}

The other direction is not as easy to define: intuitively, we require that whenever the simulated system is capable to assemble a new tile of type $t$, the simulator can also assemble a supertile representing $t$. However, because supertiles do not always determine which tile they represent by a single tile placement, but sometimes by many tile additions, the definition needs more care.

However, our results do not use the fully detailed version of this definition, instead requiring only that if a production $p$ of the simulated system is not terminal, then no production of the simulator representing $p$ is terminal.

\begin{definition}
\label{def-s-models-t}
We say that $\mathcal{S}$ \emph{models} $\mathcal{T}$ (under $R$), and we write
$\mathcal{S} \models_R \mathcal{T}$, if for every $\alpha \in
\prodasm{\mathcal{T}}$, there exists $\Pi \subset \prodasm{\mathcal{S}}$ where
$R^*(\alpha') = \alpha$ for all $\alpha' \in \Pi$, such that, for every $\beta
\in \prodasm{\mathcal{T}}$ where $\alpha \rightarrow^\mathcal{T} \beta$, (1) for
every $\alpha' \in \Pi$ there exists $\beta' \in \prodasm{\mathcal{S}}$ where
$R^*(\beta') = \beta$ and $\alpha' \rightarrow^\mathcal{S} \beta'$, and (2) for
every $\alpha'' \in \prodasm{\mathcal{S}}$ where $\alpha''
\rightarrow^\mathcal{S} \beta'$, $\beta' \in \prodasm{\mathcal{S}}$,
$R^*(\alpha'') = \alpha$, and $R^*(\beta') = \beta$, there exists $\alpha' \in
\Pi$ such that $\alpha' \rightarrow^\mathcal{S} \alpha''$.
\end{definition}

The previous definition essentially specifies that every time $\mathcal{S}$
simulates an assembly $\alpha \in \prodasm{\mathcal{T}}$, there must be at least
one valid growth path in $\mathcal{S}$ for each of the possible next steps that
$\mathcal{T}$ could make from $\alpha$ which results in an assembly in
$\mathcal{S}$ that maps to that next step.

\begin{definition}
\label{def-s-simulates-t} We say that $\mathcal{S}$ \emph{simulates}
$\mathcal{T}$ (under $R$) if $\mathcal{S} \Leftrightarrow_R \mathcal{T}$
(equivalent productions), $\mathcal{T} \dashv_R \mathcal{S}$ and $\mathcal{S}
\models_R \mathcal{T}$ (equivalent dynamics).
\end{definition}

\newcommand{\REPL}{\mathsf{REPR}} \newcommand{\frakC}{\mathfrak{C}}

\subsection{Intrinsic Universality}
\label{sec:iu_def}
Now that we have a formal definition of what it means for one tile system to
simulate another, we can proceed to formally define the concept of intrinsic
universality, i.e., when there is one general-purpose tile set that can be
appropriately programmed to simulate any other tile system from a specified
class of tile systems.

Let $\REPL$ denote the set of all supertile representation functions (i.e.,
$m$-block supertile representation functions for all $m\in\Z^+$).
Define $\frakC$ to be a class of tile assembly
systems, and let $U$ be a tileset.

\begin{definition}\label{def:iu-specific-temp}
We say $U$ is \emph{intrinsically universal} for $\frakC$ \emph{at temperature}
$\tau' \in \Z^+$ if there are functions $\mathcal{R}:\frakC \to
\REPL$ and $S:\frakC \to \mathcal{A}^U_{< \infty}$ such that, for each
$\mathcal{T} = (T,\sigma,\tau) \in \frakC$, there is a constant $m\in\N$ such
that, letting $R = \mathcal{R}(\mathcal{T})$,
$\sigma_\mathcal{T}=S(\mathcal{T})$, and $\mathcal{U}_\mathcal{T} =
(U,\sigma_\mathcal{T},\tau')$, $\mathcal{U}_\mathcal{T}$ simulates $\mathcal{T}$
at scale $m$ and using supertile representation function~$R$.
\end{definition}
That is, $\mathcal{R}(\mathcal{T})$ is a representation function that
interprets assemblies of $\mathcal{U}_\mathcal{T}$ as assemblies of
$\mathcal{T}$, and $S(\mathcal{T})$ is the seed assembly used to program
tiles from $U$ to represent the seed assembly of $\mathcal{T}$.

\begin{definition}
\label{def:iu-general}
We say that~$U$ is \emph{intrinsically universal} for $\frakC$ if it is
intrinsically universal for $\frakC$ at some temperature $\tau'\in Z^+$.
\end{definition}

\section{Simulations and the Bisimilarity Lemma}

\subsection{The Bisimilarity Lemma}

In order to look at local phenomena in the assembly, we define \emph{restricted dynamics}, where we are only concerned about what happens within a region of the space. This is the intuition behind the \emph{policy set} of a tile assembly system. Moreover, its \emph{diplomatic set} is the set of its possible communications with the rest of the space, through its border.

\begin{definition}[Glue movies]
  Let $\mathcal{T}=(T,\sigma,\tau)$ be a tile assembly system on some graph $G=(V,E)$.

  Let $F\subseteq E$ be a set of edges of $G$. A \emph{glue movie} on $F$ is a sequence $m$ of \emph{glue additions}, where for all $i\geq 0$, a glue addition $m_i$ is defined by $m_i = (e_i, s_i, g_i) \in F \times \{+,-\} \times \Glues(T)$. We call $e_i$ the edge of the glue addition, $s_i$ its orientation (i.e. on which side of the edge is the new tile attached), and $g_i$ its type.
Moreover, no two glue additions in a movie can have the same edge $e_i$; formally, for all $i,j\geq 0$ such that $i\neq j$, $e_i\neq e_j$.
\end{definition}

A glue movie is therefore a sequence of glue additions, oriented in a canonical way: for any $F\subseteq E$, an assembly sequence induces a (possibly empty) glue movie along the edges of $F$, in the following way: whenever a new tile $t$ attaches to the assembly using edges of $F$, we add to the glue movie all the edges of $F$ where $t$ matches the existing assembly, starting in clockwise order from the North of $t$. Moreover, the orientation $o\in\{+,-\}$ that we choose is $+$ for the south and west glues of $t$, and $-$ for its north and east glues (for the glues that are added to the glue movie).

\begin{definition}[Policy and diplomatic sets]
  Let $\mathcal{T}=(T,\sigma,\tau)$ be a tile assembly system on some graph $G=(V,E)$. Moreover, let $W\subseteq V$ be a set of vertices of $G$ with a local origin, and $F\subseteq E$ be the set of edges between $W$ and $V\setminus W$, with the same local origin.
  The \emph{policy set} $\mathcal{P}(\mathcal{T},W)$ of $\mathcal{T}$ is the set of the restrictions to vertices of $W$ of all possible assembly sequences of $\mathcal{T}$ (including all sequences not corresponding to a terminal assembly).
  The \emph{diplomatic set} $\mathcal{D}(\mathcal{T},F)$ of $\mathcal{T}$ along $F$ is the set of glue movies on $F$ defined by all possible assembly sequences of all productions of $\mathcal{T}$.
\end{definition}

\begin{definition}
   Let $\mathcal{T}=(T,\sigma,\tau)$ be a tile assembly system on some graph $G=(V,E)$. Let $X \subseteq E$ be a cut-set separating $G$ into two connected components $Y$ and $Z$ such that $\dom\sigma\subseteq Y$.

   Let $P = \mathcal{P}(\mathcal{T},Z)$, $D = \mathcal{D}(\mathcal{T},X)$, and let $P'$ be a set of prefixes of elements of $P$. The \emph{restriction of $D$ to $P'$}, noted $D_{|P'}$ is the set of all prefixes of $D$ where glues are only put next to tiles which are attached in $P'$.
\end{definition}

\begin{lemma}[Bisimilarity Lemma]
\label{lem:bisimilarity}
  Let $\mathcal{T}=(T,\sigma,\tau)$ be a tile assembly system on some graph $G=(V,E)$. There is a map $f$, from diplomatic sets to policy sets, such that for any cut-set $X\subseteq E$ separating $G$ into two connected components $Y$ and $Z$ such that $\dom\sigma\subseteq Y$, $\mathcal{P}(\mathcal{T},Z)=f(\mathcal{D}(\mathcal{T},X))$.

  (In other words, the policy set of a tile assembly system, in a zone $Z$ not containing any tile of $\sigma$, depends only on the diplomatic set on the border of $Z$).

  Moreover, $f$ is \emph{continuous}: for any such $X,Y,Z$, take $P'$ to be a set of prefixes of $\mathcal{P}(\mathcal{T},Z)$, $P'$ can be obtained as $f(D_{|P'})$.
\end{lemma}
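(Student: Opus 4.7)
I would define the map $f$ constructively from glue movies alone (without reference to any specific cut), and then prove the required equality by two inclusions; continuity will be inherited from the construction.

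Given an arbitrary set $D$ of glue movies, declare $f(D)$ to be the set of those tile-attachment sequences $p$ for which some $m\in D$ admits what I will call a \emph{synchronized run}: a single total order on the combined events of $p$ and $m$ such that (i) the events of $m$ appear in their original order; (ii) every $m$-event whose owner tile (as determined by its orientation $\pm$) lies on the $p$-side coincides with the attachment in $p$ responsible for producing that glue; and (iii) each attachment in $p$ is $\tau$-stable using only already-placed tiles of $p$ together with the glues supplied by already-scheduled $m$-events at adjacent edges. Because the definition mentions only abstract movies, the same $f$ works uniformly for every cut $X$.

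The forward inclusion $\mathcal{P}(\mathcal{T},Z)\subseteq f(\mathcal{D}(\mathcal{T},X))$ is direct: any $p\in\mathcal{P}(\mathcal{T},Z)$ comes from some assembly sequence $S$ of $\mathcal{T}$ that also induces a movie $m\in\mathcal{D}(\mathcal{T},X)$, and the chronological order inherited from $S$ is a valid synchronized run for $(p,m)$. The backward inclusion is the main step. Starting from $m\in\mathcal{D}(\mathcal{T},X)$ witnessed by an assembly sequence $S^{\ast}$ of $\mathcal{T}$, let $S^{\ast}_Y$ be the subsequence of $S^{\ast}$ that attaches tiles in $Y$, and let $p$ be any sequence compatible with $m$. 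I build a full assembly sequence of $\mathcal{T}$ by splicing $p$ with $S^{\ast}_Y$ in the order forced by $m$: whenever $m$ next schedules a $Y$-side event, execute the prefix of $S^{\ast}_Y$ up through the attachment producing that event (this naturally inserts any $Y$-internal attachments of $S^{\ast}$ that preceded it); whenever $m$ next schedules a $Z$-side event, condition (ii) forces the corresponding attachment of $p$ to be executed; $p$-attachments not adjacent to $X$ may be slotted in freely between movie events. An induction on sequence length shows every resulting attachment is $\tau$-stable: by locality, the strength sum at position $v$ depends only on the four (or six) edges at $v$, and each such edge is either internal to $Y$ (matching $S^{\ast}$), internal to $Z$ (matching $p$), or lies on $X$ (carrying exactly the glue imposed by the corresponding already-scheduled event of $m$). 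The $Z$-restriction of the spliced sequence is exactly $p$, so $p\in\mathcal{P}(\mathcal{T},Z)$.

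Continuity follows from the fact that the construction of $f$ is monotone and local in its input: restricting $D$ to $D_{|P'}$ deletes exactly those movie events whose forced $Z$-side tile is not in $P'$, and re-running the two inclusions above on the restricted data gives $f(D_{|P'})=P'$. The main obstacle is the $\tau$-stability check in the backward direction; the entire force of the argument is the \emph{locality} of the tile-attachment rule together with the observation that a fixed movie $m$ decouples the dynamics on the two sides, so that the histories in $Y$ and $Z$ can be reshuffled independently provided their interactions across $X$ continue to match $m$.
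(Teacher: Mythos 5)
Your overall strategy coincides with the paper's: define $f$ as the set of $Z$-side sequences compatible with a glue movie, get completeness directly from restriction, and get soundness by splicing the $Z$-side sequence with the $Y$-side part of a witnessing sequence in the order dictated by the movie. However, there is a genuine gap in your definition of a synchronized run, and your soundness step leans on exactly the missing clause. Conditions (ii)--(iii) constrain only the movie events \emph{owned by the $Z$ side} and the stability of the $p$-attachments; they say nothing about the glues that $p$-tiles present on cut edges at events owned by the $Y$ side, nor that those $p$-tiles are scheduled before such events. Your backward argument then asserts that every cut edge used by a $Y$-attachment of $S^{\ast}$ ``carries exactly the glue imposed by the corresponding already-scheduled event of $m$'', but this does not follow: a $Y$-owned event at an edge $e$ records a $Y$-tile of $S^{\ast}$ binding to a $Z$-tile that was already in place with the matching label, whereas in $p$ that position may hold a tile with a different glue on $e$ (it only had to attach stably inside $Z$). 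The spliced $Y$-attachment can then lack the strength it had in $S^{\ast}$. Concretely, at temperature $1$ one can arrange that the only way for the $Y$ side to reach the far endpoint of a later cut edge $e'$ is through the tile attached across $e$; a sequence $p$ that replaces the $Z$-tile abutting $e$ by one with a different cut glue still satisfies your (i)--(iii) (it reproduces all $Z$-owned events and every attachment is stable), yet the later $Z$-owned event at $e'$ cannot be realized by any genuine assembly sequence, so with your $f$ the inclusion $f(\mathcal{D}(\mathcal{T},X))\subseteq\mathcal{P}(\mathcal{T},Z)$ fails.

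The repair is precisely the clause the paper builds into its construction of $f$: when a tile is placed in $Z$ at a vertex incident to the cut, it must be compatible with \emph{all} glue additions of the movie involving that vertex, including the later events owned by the $Y$ side, and it must be scheduled before those events. With that strengthening of your synchronized-run definition, your splicing induction goes through essentially as the paper's does. Two smaller points: your completeness direction should also cover infinite productions (the paper closes $f$ under limits of increasing sequences of finite prefixes), and the continuity claim deserves more than ``re-run the two inclusions'' --- although the paper itself is terse on both counts.
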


Note that for any $X,Y$, $\mathcal{D}(\mathcal{T},X) = \mathcal{D}(\mathcal{T},X)$ implies that $Y$ is obtained by a translation of $Y$, as otherwise $\mathcal{D}(\mathcal{T},X)$ and $\mathcal{D}(\mathcal{T},X)$ are not even defined on the same set.

\subsection{Use of Lemma~\ref{lem:bisimilarity}}

Before we prove the Bisimilarity Lemma, let us review some of its uses and, for the reader familiar with the Window Movie Lemma \cite{Meunier-2014}, how it differs from that lemma.

Let us first consider the system $S$ defined on Figure~\ref{fig:bisim_example}. It is quite clear that its productions are all $1 \times n$ rectangles, with $n \operatorname{mod} 4 = 1$. We want to prove that it has a final production whose shape is a $1 \times n$ rectangle for arbitrarily large $n$. This can be proved ``by observation'', by exhibiting the relevant production for infinitely many values of $n$. But giving a proof through the Bisimilarity Lemma (Lemma \ref{lem:bisimilarity}) will provide insight into how to apply the lemma to cases where observation of the tileset in infeasible.

\begin{figure}
  \centering
  \includegraphics[width=\textwidth]{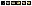}
  \caption{A simplistic tile assembly system}
  \label{fig:bisim_example}
\end{figure}

First, let us notice that no production $S$ has a tile outside of the $x = 0$ line. Let us consider for all $i > 0$ the zone $Z_i = \{(x,y) | x > i, y \in \Z\}$. For any $i$, $\mathcal{D}(S, Z_i)$ is a set of glue movies on the set of edges \( \partial Z_i = \{ ((i,y),(i+1,y)) | y \in \Z\} \). These glue movies have at most two glue additions, at $\{(i,0),(i+1,0)\}+$ and $\{(i,0),(i+1,0)\}+$. There are $6$ different glues in $S$, therefore there are at most $36 + 6 + 1 = 43$ possible glue movies of length $0$, $1$ or $2$\footnote{Actually, the glue movie is determined by the first glue, so there are only $7$ possible movies.}. Thus, there are at most $2^{43}$ different values of $\mathcal{D}(S, Z_i)$.

Since $S$ has a production of length greater than $2^{43}$ (this can be seen by actually building such a production), there are $i < j$ such that $\mathcal{D}(S, Z_i) = \mathcal{D}(S,Z_j)$, and $\mathcal{D}(S, Z_i)$ is not reduced to the empty glue movie. By the Bisimilarity Lemma, $\mathcal{P}(S,Z_j) = \mathcal{P}(S,Z_i)$. Because of this, for any production $p$ of $S$, there is a production $p'$ such that $p' \cup Z_j$ is $(p \cup Z_i)$ translated by $(j-i,0)$. But then, $p'$ is $j-i$ tiles longer than $p$, from which we get the result.

This proof shows how to ``pump'' an assembly using the Bisimilarity Lemma. In this example, the same result can be achieved by the Window Movie Lemma. 

The proof of Theorem~\ref{thm:nomismatches} shows how the two lemmas differ.

\subsection{Proof of Lemma~\ref{lem:bisimilarity}}
\begin{proof}

  We first define $f$ for singletons: let $S$ be a glue movie on $X$. We build a set $f(\{S\}) \subseteq (Z\times T)^{\mathbb{N}}$ of assembly sequences of $\mathcal{T}$, by induction on the length of $S$.

  First let $f_0=\{\emptyset\}$. Then, for all $i\geq 0$, let $(e_i,s_i,g_i)=S_i$ be tile attachment of index $i$ in the assembly sequence.
  \begin{itemize}
  \item If $e_i=(a,b)$, $b\in Z$ (\resp $a\in Z$) and $s_i=+$ (\resp $s_i=-$), let $T_i$ be the set of assembly sequences not involving any glue of $X$, that can grow from some assembly induced by an assembly sequence of $f_i$, and that is compatible with all the glue additions of $S$ involving vertex $b$ (\resp vertex $a$).

    Remark that there are at most four such glue additions in $S$, and that, by the construction of an induced movie, they are all consecutive in $S$. Also, by this definition, any prefix of an assembly sequence of $T_i$ is also in $T_i$.

    Then, let $S_{i+1}=S_i\cup T_i$.
  \item Else, let $S_{i+1}=S_i$.
  \end{itemize}

If $S$ is finite, of length $n$, this construction stops after $m\leq n$ steps, and we let $f(\{S\})=S_m$. Else, this construction never stops. Let $f_{\infty}=\bigcup_{i\in \mathbb{N}} f_i$, and $f(\{S\})$ be the union of $f_{\infty}$ and the set of all limits of increasing sequences (with respect to the prefix order) of words from $f_{\infty}$.
Moreover, for any diplomatic set $D$, $f(D)$ is defined by $f(D)=\bigcup_{m\in D} f(\{m\})$.

Finally, we prove that our construction for $f$ satisfies our claims:

\paragraph{\bf $f$ is complete, i.e. $\mathcal{P}(\mathcal{T},Z)\subseteq f(\mathcal{D}(\mathcal{T},X))$.}

Let $\alpha$ be a finite production of $\prodasm{\mathcal{T}}$, and $S_\alpha$ be an assembly sequence of $\alpha$. By an immediate induction on the length of $S_\alpha$, its restriction to $Z$ is in $f(\mathcal{D}(\mathcal{T},X))$.

Now, if $\alpha$ is an infinite production of $\prodasm{\mathcal{T}}$, any assembly sequence $S_\alpha$ of $\alpha$ is also in $f(\mathcal{D}(\mathcal{T},X))$: indeed, let $M_\alpha$ its induced glue movie on $Z$. Either $M_\alpha$ uses the edges of $X$ only a finite number of times, and therefore one of its suffixes is included in $T_i$ for some $i$, or it uses the edges of $X$ an infinite number of times, and then the construction of $f(\{S_\alpha\})$ has infinitely many steps, each with a finite number of tile additions.

Therefore, in this case, all finite prefixes of $S_\alpha$ are in $f(\mathcal{D}(\mathcal{T},X)$. Since $S_\alpha$ itself is the limit of these prefixes, it is therefore also in $f(\mathcal{D}(\mathcal{T},X))$.

\paragraph{\bf $f$ is sound, i.e. $f(\mathcal{D}(\mathcal{T},X))\subseteq \mathcal{P}(\mathcal{T},Z)$.}

We now prove that for any assembly sequence $s\in f(\mathcal{D}(\mathcal{T},X))$, there is a valid assembly sequence $s^\beta$ of some $\beta\in\prodasm{\mathcal{T}}$ such that $s$ is the restriction of $s^\beta$ to $Z$.

Since $s\in f(\mathcal{D}(\mathcal{T},X))$, there is a glue movie $m\in\mathcal{D}(\mathcal{T},X)$ on $X$ such that $s\in f(\{m\})$. By definition of the diplomatic set, $m$ is the induced movie of some assembly sequence $r$ of $\mathcal{T}$.

Now, we build $s^\beta$ by induction: we will build a sequence $(s^\beta_n)_{n\in\mathbb{N}}$ of assembly sequences, placing at least one tile at each step, with the invariants that for all $n\geq 0$, $s^\beta_n$ is a valid assembly sequence of $\mathcal{T}$, and the restriction of $s^\beta_n$ to $Z$ is a prefix of $s$.

We need a few auxiliary sequences, also defined inductively: $(a_n)_{n\in\mathbb{N}}$ is the sequence of indices in $r$, starting with $a_0$, defined as the smallest integer such that $r_{a_0}$ is placed in $Y$, and $(u_n)_{n\in\mathbb{N}}$ is the sequence of indices in $s$, starting at $u_0=0$.

\begin{itemize}
  \item If possible, let $b_n\geq a_n$ be the largest integer such that all tile attachments of $r_{a_n,a_n+1,\ldots,b_n}$ be in $Y$, and $s_n=s^\beta_n\cdot r_{a_n,a_n+1,\ldots,b_n}$ be a valid assembly sequence.

    In this case, let $s^\beta_{n+1}=s_n$, and let $a_{n+1}>b_n$ be the smallest integer such that $r_{a_{n+1}}$ is a tile attachment in $Y$. Also, let $u_{n+1}=u_n$.

  \item else, if movie $m$ is infinite, or if $i<|m|$, we can move forward in $m$: indeed, the next step of $r$ (at $b_n+1$) necessarily involves a glue addition on some edge of $X$. Moreover, since $m$ is the movie from which we built $s$, tile attachment $s_{u_{n}}$ can be done in $\beta^n$. Let thus $s^\beta_{n+1}=s^\beta_{n}\cdot s_{u_{n+1}}$, and let $u_{n+1}=u_n+1$. Also, let $a_{n+1}=a_n$.

\end{itemize}
\end{proof}

\subsection{Differences with known results}

Remark how this definition and lemma differs from the window movie lemma~\cite{Meunier-2014}: by Lemma~\ref{lem:bisimilarity}, if two diplomatic sets are the same up to translation, we can ``swap'' their whole policy sets (formally, the same tile assembly system can also produce an assembly with the part between the windows taken out).
In other words, we can move partial assemblies back and forth between these two translations of the window, at \emph{any} step of the dynamics.

In contrast, the window movie lemma uses a weaker hypothesis: in our formalism, it just asks for two diplomatic sets to intersect. It also yields a weaker conclusion, by showing that we can ``swap'' translations of partial assemblies \emph{only once}.

\section{Systems without mismatches do not simulate general systems}

This section proves Theorem~\ref{thm:nomismatches}; the proof is split into three subsections: Section~\ref{preserve} shows that we can switch back and forth between the cutspace and $\mathbb{Z}^2$, without changing geometric properties of the simulation. Section~\ref{keepgoing} shows a slightly stronger statement, namely that the dynamics are also the same in $\mathbb{Z}^2$ and in the cutspace. Altogether, these section intuitively show that the tileset cannot ``detect'' that it is being ran in the cutspace. Finally, Section~\ref{unify} shows how to reconcile sequences that happen in the cutspace into a valid sequence in $\mathbb{Z}^2$, that a claimed simulator without mismatches must be able to grow.

We say that two adjacent tiles \emph{do not match} if at least one of
them has a strictly positive glue on their common side, and their respective
colors on that side are different.

\begin{reptheorem}{thm:nomismatches}
There is a tile assembly system $\mathcal{T}=(T,\sigma,1)$, such that for all simulator $\mathcal{U}$ of $\mathcal{T}$, there are assemblies $\alpha\in\prodasm{\mathcal{U}}$ where at least two adjacent tiles of $\alpha$ do not match on their abutting side.
\end{reptheorem}
\begin{proof}
Let $\mathcal{T}=(T,\sigma,1)$ be the tile assembly system with the tiles of Figure \ref{fig:cutspace-tas}, and $\sigma$ is the bottom-left tile of Figure \ref{fig:cutspace-tas}, at position $(0,0)$.

  \begin{figure}[ht]
      \includegraphics[width=\textwidth]{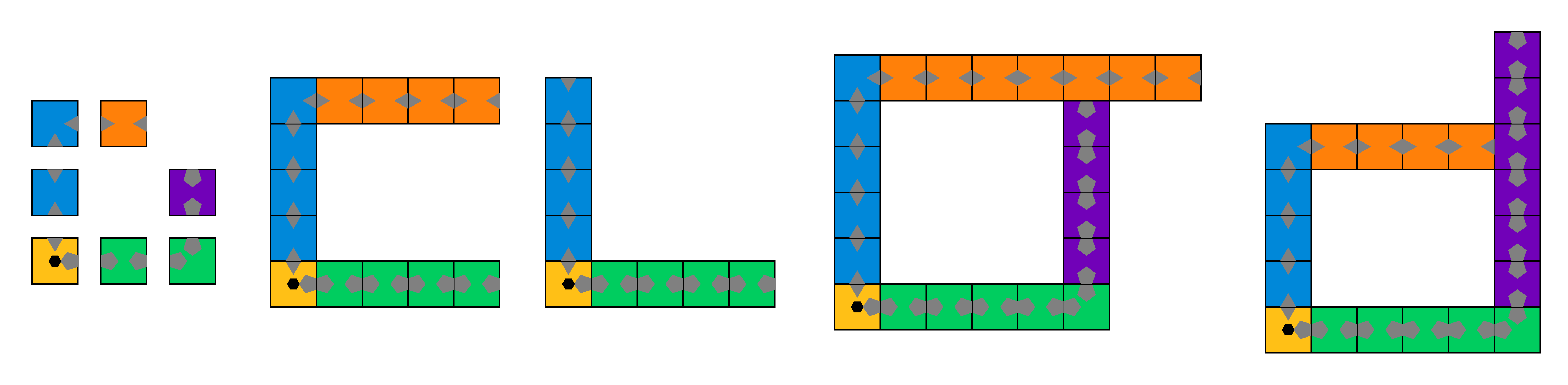}
    \caption{A tile assembly system at temperature 1; the seed tile is marked with a hexagon; in the pictured productions, the \emph{upper arm} is in dark, and the \emph{lower arm} in light.}
    \label{fig:cutspace-tas}
  \end{figure}

One can readily see that the productions of $\mathcal{T}$ are all one of the shapes shown on Figure \ref{fig:cutspace-tas}. More precisely, they each have one \emph{upper} and one \emph{lower} arm (as pictured on Figure \ref{fig:cutspace-tas}); if the bottom arm turns left and grows upwards, and the upper arm turns right and grows to the right, then the two arms compete at the intersection. The first to arrive continues towards infinity, and the other ``crashes'' into it.

\begin{figure}[h!]
  \includegraphics[width=\textwidth]{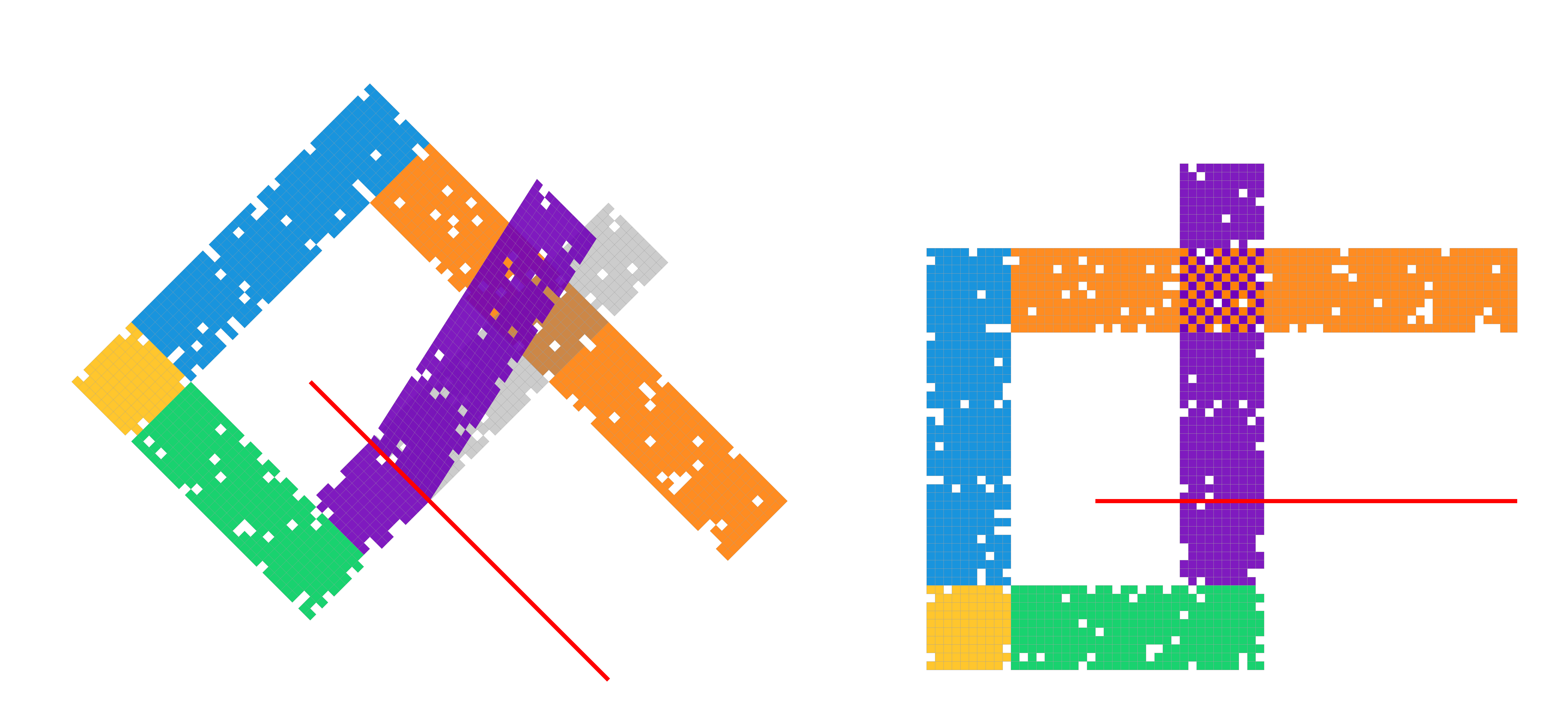}
  \caption{Left: Conflict between the two arms is averted in the
    cutspace: the upper arm passes ``over'' the lower arm. Right:
    trying to embed this production in $\Z^2$ might lead to a
    conflict in the intersection zone unless the tiles there coincide.}
  \label{fig:cutspace-compatible}
\end{figure}

Now, assume for the sake of contradiction, that there is a tile assembly system $\mathcal{U}$, that simulates $\mathcal{T}$ with scale factor $s$, and that $\mathcal{U}$ does so without mismatches. We will show that $\mathcal{U}$ is also able to assemble a production where the two arms cross, and both grow infinitely far. Since this production does not represent any production of $\prodasm{\mathcal{T}}$, this will contradict our hypothesis that $\mathcal{U}$ can simulate $\mathcal{T}$.

\subsection{The cut-space, and how to embed assemblies of $\mathbb{Z}^2$ into it, that preserve diplomatic and policy sets}
\label{preserve}

First, let us define an auxiliary assembly grid $G=(V,E)$, called the \emph{cut-space}, where the set $V$ of vertices is made of two copies of $\mathbb{Z}^2$: $V=V_0\cup V_1$, where $V_0=\{(x,y,0) | (x,y)\in\mathbb{Z}^2\}$ and $V_1=\{(x,y,1) | (x,y)\in\mathbb{Z}^2\}$.
Let $E_0$ and $E_1$ be the set of edges within each copy of $\Z^{2}$, and for $i\in\{0,1\}$, let $E_i=\{((x,y,i),(x',y',i)) | x,y,x',y'\in V\hbox{ and }|x-x'|+|y-y'|=1)\}$.

The set $E$ of edges of the cut-space is defined by exchanging the ends of the edges along a ray defined in $\mathbb{R}^2$ by $y=2s+0.5$ and $x\geq 2$. More precisely:
\begin{eqnarray*}
E&=&(E_1\cup E_0)\\
&&\setminus\left\{((x,2s,i),(x,2s+1,i)) | x\geq 2,i\in\{0,1\}\right\}\\
&&\cup\left\{((x,2s,i),(x,2s+1,1-i)) | x\geq 2,i\in\{0,1\}\right\}
\end{eqnarray*}

The \emph{projection} $\pi$ of the cutspace $G$ into $\mathbb{Z}^2$ is the map defined on $G$ by $\pi(x,y,z)=(x,y)$.

All positions in the cutspace still have four neighbors. Remark that the geometry of this graph makes the intuitive model of assemblies with square tiles less intuitive, as two surfaces could be assembled, that cross each other.
However, in our construction, we will only use tiles in one direction at these positions, the left arm growing upwards. The idea of defining this space is to grow both assemblies of Figure \ref{fig:cutspace-tas} in the same space, as shown on Figure \ref{fig:cutspace-compatible}.

For each $y \geq 3$, the set $C_y = \{((x,y,1),(x,y+1,1)) | x \in \Z\}$
is a cut-set of the cutspace. Also, for each $x\geq 3$,
$D_x = \{((x,y,0),(x,y+1,0)) | y>3 \}\cup \{((x',y,0),(x'+1,y,0)) | x'\geq x \}$
is a cut-set of the cutspace. Moreover, $E_k=C_k\cup D_k$ is also a cut-set of $G$, separating it into three parts, shown on Figure \ref{fig:cutset}. Likewise, define $A_{>y} = \{(x,y',1) | x \in \Z, y' > y\}$, $B_{>x} = \{(x',y,0) | y>3 \wedge x' > x\} \}$ and $V_{>k} = A_{>k} \cup B_{>k}$ to be the ``far side'' of $C_x$, $D_y$ and $E_k$. We will use these parts to describe the growth of the arms ``past the crossing''.

Now, let $W,H$ be two integers. Let $p$ be a production of $\mathcal{T}$ in $\Z^2$ in which the lower arm has run $W$ tiles to the right and one tile upwards, and the upper arm has run $H$ tiles upwards and one tile to the right. We know that continuing to run $\mathcal{T}$ from $p$ is going to enclose a $W \times H$ rectangle.

Since $\mathcal{U}$ simulates $\mathcal{T}$ with scale factor $s$, there is a production $q$ of $\mathcal{U}$ which is a representation of $p$. From now on, we will study productions which can be obtained from $q$. Define an auxiliary tile assembly system $\mathcal{U}' = (U, q, \tau)$, which is $\mathcal{U}$ with $q$ as the new seed.
Remark that any production of $\mathcal{U'}$ is also producible by $\mathcal{U}$.

We do not know what is the behaviour of  $\mathcal{U}'$ in the cutspace; it does not follow from the definition of simulation that it should behave like $\mathcal{T}$ does in the cutspace. 

Let $\epsilon$ be an embedding, mapping positions $z=(x,y)$ with $y \geq 2$ and $x \geq 2$ to $\epsilon(z) = (x,y,1)$, and the other positions to $\epsilon(z) = (x,y,0)$. $\epsilon$ maps adjacent positions in $\Z^2$ to adjacent positions in the cutspace.

Let $u = (t_i@z_i)_{i\in\N}$ be an assembly sequence of $\mathcal{U}'$ in $\Z^2$. Define
$\epsilon(u)$ \emph{in the cutspace} as the embedding of this sequence, i.e. the sequence defined by $(t_0@\epsilon(z_0)), (t_1@\epsilon(z_1)),\ldots,t_n@\epsilon(z_n)$.  Observe that $\epsilon(u)$ is an assembly sequence of $\mathcal{U}'$ in the cutspace, since adjacency is preserved under $\epsilon$. Its upper arm lives in the $0$ layer of the cutspace, and its lower arm in the $0$ layer for its horizontal part, and in the $1$ layer for its upwards part. Therefore, any assembly sequence of $\mathcal{U}'$ in $\Z^2$ yields by $\epsilon$ an assembly sequence of $\mathcal{U}'$ in the cut-space where the arms are at least as long.

\begin{figure}[ht]
\centering

\includegraphics[scale=0.3]{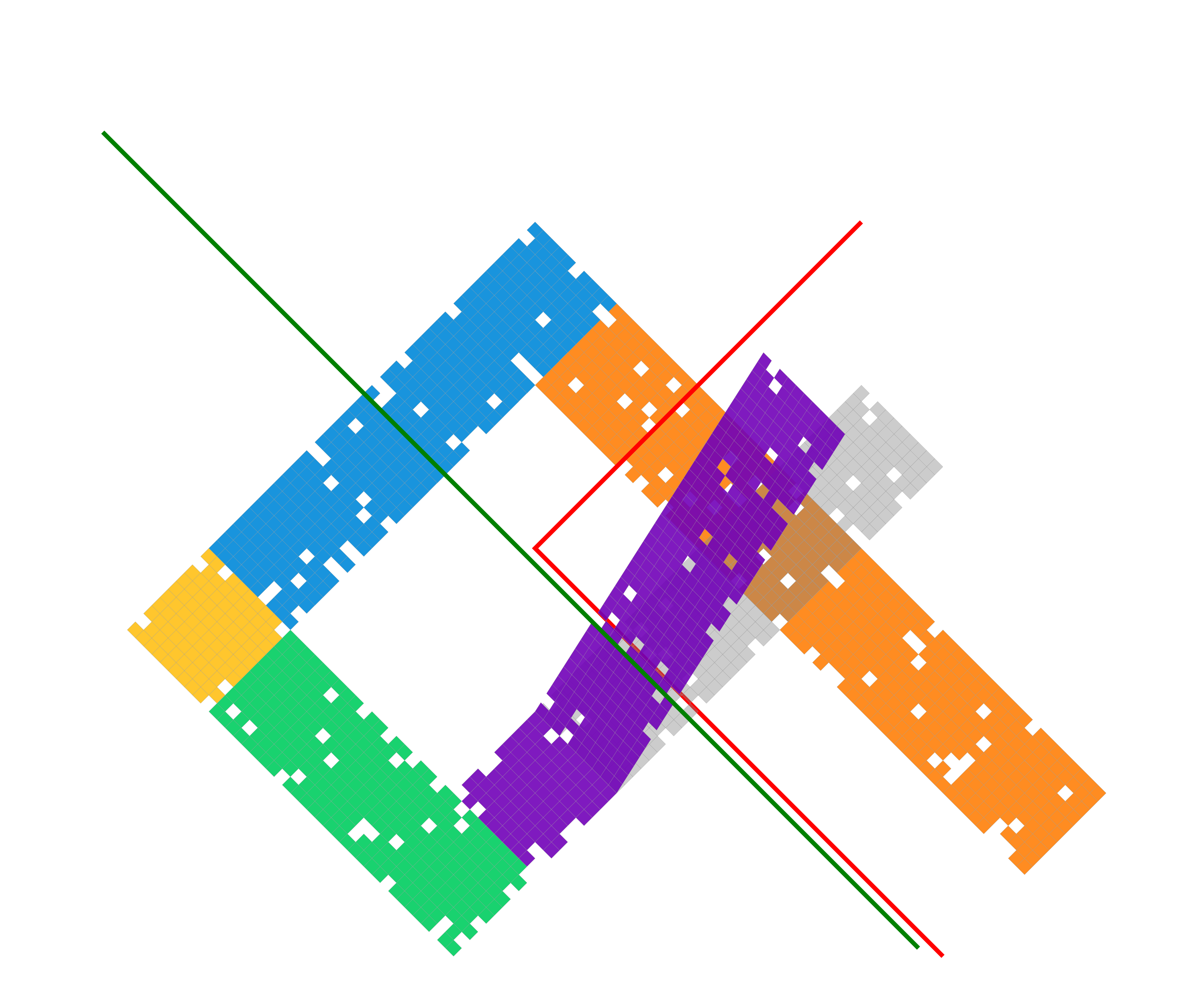}
\caption{$E_k$ partitions the cut-space into three parts. On this figure, $C_k$ is in green, and $D_k$ is in red. Note that $C_k$, does \emph{not} intersect with the blue part of the assembly (upper left), but floats over it. Likewise, $D_k$ does not intersect the purple arm.}
\label{fig:cutset}
\end{figure}

Conversely, take an assembly sequence $v$ of $\mathcal{U}'$ in the cutspace which produces $p$ where either the upper arm has length less than $sH$, or the lower arm has length less than $sW$. Then for all $z,z' \in p$ such that $z\neq z'$, $\pi(z) \neq \pi(z')$. Therefore, by mapping $\pi$ over $v$, one gets a configuration $\pi(p)$ which is a production of $\mathcal{U}'$ on $\Z^2$. Thus, if an assembly sequence of $\mathcal{U}'$ in the cutspace places a tile further away than $s$ tiles from the arms, it does so after both arms have passed position $(sH,sW,l)$, where $l$ is their respective layer. On the other hand, if $H$ and $W$ are large enough, namely, if
\(N=\min(H,W) > 2^{(4|U|+1)^{6s}\cdot (6s)!}\), then there must be \(k, k' < N\) such that the prefixes of the diplomatic sets on $E_k$ and $E_k'$ before any tile can be put further away than $s$ from the arms are the same, since all prefixes of movies before that event have at most a finite number of tile addition. By the laziness property of Lemma~\ref{lem:bisimilarity}, the prefix of the policy sets of $V_{>k}$ and $V_{>k'}$ before any tile placement outside of the arms must be the same. But then, a minimal assembly sequence which adds a tile outside the arms must have each arm have length less than $k'$, which leads to a contradiction. Therefore, $\mathcal{U'}$ only places tiles within $s$ tiles of the position of the arms on the cutspace, just as it does on $\Z^2$.

{\bf Conclusion of this section:} for large enough $W$ and $H$, there are two indices $k,k'$ such that $0 < k < k' < \min(W,H)$ such that the full diplomatic sets $\mathcal{D}(\mathcal{U}',E_k)$ and $\mathcal{D}(\mathcal{U}',E_{k'})$ (in the cut space) are the same, up to a translation by a  vector $(k'-k,0,0)$ for $D_k$, and by a vector $(0,k'-k,0)$ for $C_k$ (which is possible because $D_k$ and $C_k$ are disjoint).
Moreover, if $W>4Ns$ and $H>4Ns$, $k$ and $k'$ can be taken such that $k' - k > 3s$.

\subsection{Getting the arms to cross in the cut space}
\label{keepgoing}
The first lemma we will need is essential to prove that we can ``keep the dynamics going'', i.e. that we can add tiles, in the cut space, until the two arms ``cross'' (of course, since they are not on the same plane, the bottom arm ``jumps over'' the top arm).

One issue is, although we are simulating a system in $\mathbb{Z}^2$ which, if ran in the cut space, would produce two infinite arms nicely crossing each other, the simulation of this system in $\mathbb{Z}^2$ could send signals between the arms to detect that it is being ran in the cut space, and stop the growth of one of the arms, allowing a successful simulation.

However, the following lemma shows that no tileset can use such a strategy. To show this, we use once again the bisimilarity lemma:

\begin{lemma}
\label{lem:crocut}
Let $u$ be an assembly sequence of $\mathcal{U}'$ in the cutspace, which produces some assembly $p$. If $p$ does not have at least one tile at $(s(H+1)+1,y_H,0)$ (i.e. strictly to the right of the bottom arm) for some $y_H$ (respectively, $(x_W, s(W+1)+1,1)$ for some $x_W$, i.e. strictly above the top arm), then $p$ is not terminal. Moreover, in any attachment sequence $u'$ having $u$ as a prefix, there is an integer $i > |u|$ such that $u'_i$ is an attachment on the lower arm (respectively, the upper arm).
\end{lemma}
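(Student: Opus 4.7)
The plan is a proof by contradiction using the projection $\pi$ from the cutspace to $\mathbb{Z}^2$, the simulation hypothesis on $\mathcal{U}'$, and the Bisimilarity Lemma. Suppose for contradiction that some maximal attachment sequence $u'$ extending $u$ in the cutspace has no lower-arm attachment at any index $i>|u|$ (the upper-arm case is symmetric); let $p'$ be the assembly it produces.

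First I observe that since $p$ (and hence $p'$) has no tile at column $s(H+1)+1$ in layer $0$, and $u'$ never extends the lower arm, the tiles of $p'$ lying in layer $0$ and those lying in layer $1$ project under $\pi$ to disjoint regions of $\mathbb{Z}^2$. Hence $\pi(u')$ is a valid $\mathcal{U}'$-assembly sequence in $\mathbb{Z}^2$ producing $\pi(p')$, and by the equivalent-productions condition $\pi(p')$ represents some $\mathcal{T}'$-production $\gamma$ whose lower arm is unchanged from the $\mathcal{T}$-representation of $p$. Since the lower arm of $\gamma$ has not reached the collision zone with the upper arm, $\mathcal{T}'$ can still add a lower-arm tile to $\gamma$, so $\gamma$ is non-terminal in $\mathcal{T}'$. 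By the weaker form of the $\models$ condition noted in the paper, $\pi(p')$ is non-terminal in $\mathcal{U}'$ over $\mathbb{Z}^2$, and more precisely there is a $\mathcal{U}'$-assembly sequence $v$ extending $\pi(u')$ in $\mathbb{Z}^2$ that realizes the next $\mathcal{T}'$-step, i.e.\ grows the lower arm by an $s$-block.

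I then pull $v$ back to the cutspace via $\epsilon$: since the additions in $v$ lie within an $s$-neighborhood of the current lower arm (a region entirely in layer $0$, below the cutspace's swap row), $\epsilon$ preserves adjacency and glue structure on this region, and $\epsilon(v)$ is a valid cutspace extension of $u'$ containing a lower-arm attachment. This contradicts the maximality of $u'$; the first part of the lemma ($p$ not terminal) is the special case $u'=u$. The main obstacle is the case where $u'$ is infinite rather than finite-maximal; there I would invoke the continuity clause of Lemma~\ref{lem:bisimilarity} together with the bounding argument from Section~\ref{preserve} (that $\mathcal{U}'$ only places tiles within distance $s$ of the arms, and that the diplomatic sets $\mathcal{D}(\mathcal{U}',E_k)$ and $\mathcal{D}(\mathcal{U}',E_{k'})$ coincide for suitable $k<k'$) to reduce the infinite case to the finite one, ruling out an infinite upper-arm extension with no corresponding lower-arm growth.
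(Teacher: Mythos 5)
Your plan has the right ingredients in view (the projection $\pi$, the simulation hypothesis, the Bisimilarity Lemma), but the core of the argument fails exactly in the regime where the lemma is actually needed. Under the lemma's hypothesis only the \emph{lower} arm is short; the upper arm may already have grown past the crossing region, and in the iteration of Lemma~\ref{lem:cross} this is precisely the situation in which Lemma~\ref{lem:crocut} gets invoked. In that regime your step ``$\gamma$ is non-terminal in $\mathcal{T}$, hence the lower arm must still be able to grow in the simulator'' breaks down: in $\mathbb{Z}^2$ the lower arm of the represented production is blocked by the upper arm, so the represented production can be terminal (or becomes terminal as soon as the lower arm reaches the crash point), and neither the weak non-terminality consequence of $\models$ nor Definition~\ref{def-s-models-t} forces any further lower-arm growth of $\pi(p')$ --- let alone growth \emph{past} the crossing, which is what the iteration ultimately requires. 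A second over-reach in the same step: the weak form of $\models$ used in the paper only yields that \emph{some} tile can attach \emph{somewhere} in the simulator; it does not localize the attachment to the lower arm, and your pull-back via $\epsilon$ also cannot be confined to ``layer $0$ below the swap row'', since the lower arm's growth is exactly what must cross that row.

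The paper's proof avoids this by making the Bisimilarity Lemma the central mechanism rather than a patch for the infinite case. It first uses the equality (up to translation) of the diplomatic sets on $E_k$ and $E_{k'}$ to replace the far part of the given sequence $u$ by its translate, obtaining a sequence $u'$ in which \emph{both} arms are short; only this shrunken configuration is projected to $\mathbb{Z}^2$, where its $\mathcal{T}$-image is unambiguously non-terminal and every prolongation must extend both arms. This produces the required extension inside the policy set of $V_{>k}$, and the equality of policy sets of $V_{>k}$ and $V_{>k'}$ (the far region cannot tell whether the other arm has already passed the cut) transfers that extension back to the original configuration, yielding the lower-arm attachment claimed by the lemma. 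Your proposal, which projects the possibly far-grown assembly directly and only appeals to the Bisimilarity Lemma to handle infinite sequences, misses this transfer step and therefore does not establish the statement.
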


\begin{proof}
  Let $u$ be an assembly sequence ending in a production $p$ which does not have a tile at  $(s(H+1)+1,y_H,0)$ nor at $(x_W, s(W+1)+1,1)$ for any $y_H$ and $x_W$. By the Bisimilarity Lemma applied to $E_{k'}$ and $E_k$, one can get a corresponding assembly sequence $u'$ which does not have a tile at $(s(H-2)+1,y_H,0)$ nor at $(x_W, s(W-2)+1,1)$ for any $y_H$ and $x_W$ (see Figure~\ref{fig:diplo1}). But $\pi(u')$ is a valid assembly sequence in $\Z^2$, and in any of its prolongations, both arms must get at least $s$ rows or columns past $(Hs,Ws)$.

Therefore, in the policy set of $E_{>k}$, there must be a way to extend $u$ further than $s(H-1)$ to the right on layer $0$ or without tiles further upwards than $s(W-1)$ on layer $1$. Thus, by translating by $(k-k')$ to get the policy set of $E_{>k'}$, we get the expected result.
\end{proof}

Note that this lemma states a property relating \emph{all} assembly sequences of $\mathcal{U}'$ in two translated regions. Hence, its proof needs the full Bisimilarity Lemma (except the continuity of $f$): indeed, the Window Movie Lemma would allow us to work only on \emph{one} assembly sequence.

\subsection{Unifying assembly sequences of crossing arms, in the projection of the cut-space onto $\mathbb{Z}^2$}
\label{unify}

\begin{definition}
  Let $u$ be an sequence of attachments of $\mathcal{U}'$ in the cutspace. $u$ is \emph{consistent} if for any $t@z, t'@z' \in u$, if $\pi(z) = \pi(z')$, then $z = z'$

  For a consistent sequence of attachments $u$, we define $\Pi(u)$ to be the subsequence of first occurrences of elements in $(\pi(x))_{x \in u}$. In other words for each positions whose projection appears twice in $u$, $\Pi$ only keeps the first occurrence.
\end{definition}

\begin{lemma}
  \label{lem:reconcile}
  Let $u$ be a consistent assembly sequence of $\mathcal{U}'$ in the cutspace, and let $p$ be the production assembled by $u$. Let $i < |u'|$, and $t@(x,y,l) = u'_i$ such that $y > 2s + 1$. Then, if $(x,y,1-l)$ is attachable in $p$, $t@(x,y,1-l)$ is a valid attachment onto $p$, and thus, $u . t@(x,y,1-l)$ is a consistent assembly sequence.
\end{lemma}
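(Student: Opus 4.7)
The key geometric observation is that for $y > 2s+1$, the cutspace positions $(x,y,0)$ and $(x,y,1)$ sit strictly above the layer-swap edges at $y = 2s + 0.5$, so each of them, together with its cutspace neighbors, lies in its own disjoint copy of a local patch of $\mathbb{Z}^2$. Writing $D_l \subseteq \{N,E,S,W\}$ for the set of directions in which $(x,y,l)$ has a layer-$l$ neighbor in $\dom p$, and $D_{1-l}$ for the analogous set for $(x,y,1-l)$, consistency of $u$ forces $D_l \cap D_{1-l} = \emptyset$: two tiles projecting to a common neighbor position would contradict the definition of consistent. The attachment $t @ (x,y,l) = u_i$ certifies that $t$'s glues on the sides in $D_l$ match the incident layer-$l$ glues with matching strength summing to at least $\tau$ and without mismatch; attachability of $(x,y,1-l)$ in $p$ certifies the same, for \emph{some} tile $t'$, on the sides in $D_{1-l}$ against the layer-$(1-l)$ neighbors.

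The heart of the proof is to upgrade ``some $t'$ attaches at $(x,y,1-l)$'' to ``the specific tile $t$ attaches there'', and I would do this through the Bisimilarity Lemma. Consider a cut-set isolating a small neighborhood of $(x,y,1-l)$ inside the upper layer-$(1-l)$ half-plane; since $y>2s+1$, this cut lies entirely within one layer, and Lemma~\ref{lem:bisimilarity} then expresses the policy set on one side as a continuous function of its diplomatic set. The two upper halves of the cutspace are isomorphic through the layer-swap, and consistency of $u$ makes their diplomatic sets agree after this swap, so the fact that $t$ is in the layer-$l$ policy set at $(x,y,l)$ implies it is in the layer-$(1-l)$ policy set at $(x,y,1-l)$. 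Translated back to glue conditions, this says that $t$'s glues on the sides in $D_{1-l}$ match the corresponding layer-$(1-l)$ neighbors, with total matching strength at least $\tau$ and without mismatch --- exactly what is needed for $t @ (x,y,1-l)$ to be a valid attachment onto $p$. Consistency of $u \cdot t @ (x,y,1-l)$ is then automatic: the only newly occupied projected position is $(x,y)$, which already carries the tile type $t$ in $u$ via $(x,y,l)$, so no new projection collision is introduced.

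The main obstacle is the careful setup of the Bisimilarity cut and the accompanying swap. Unlike the typical uses of Lemma~\ref{lem:bisimilarity} seen so far, where one exchanges two translated copies of a region in $\mathbb{Z}^d$, here the two regions live in different layers of the cutspace and the ``translation'' is effectively a layer-swap that is a graph automorphism only of the half-space $y > 2s$. Checking that this swap preserves the glue-movie structure well enough for Bisimilarity to apply, and that the consistency constraints inherited from $u$ align the two diplomatic sets exactly rather than up to some residual discrepancy (in particular near the crossing edges themselves), is where the delicate bookkeeping lies.
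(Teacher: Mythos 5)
There is a genuine gap, and it sits exactly where you located ``the heart of the proof''. Your plan to upgrade ``some tile attaches at $(x,y,1-l)$'' to ``the tile $t$ attaches there'' via the Bisimilarity Lemma does not work as stated. First, the Bisimilarity Lemma lets you identify the \emph{policy sets} of two regions only when their \emph{diplomatic sets} coincide (up to the identification of the two edge sets), and the diplomatic set is a property of the whole system $\mathcal{U}'$, quantifying over all of its assembly sequences --- consistency of the single sequence $u$ gives you no control over it. Moreover the cutspace is not layer-symmetric: the seed production $q$ and the embedding $\epsilon$ place the two arms asymmetrically across the layers, so there is no automorphism argument making the two half-spaces above $y=2s$ interchangeable. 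Second, even if the diplomatic sets did agree, the lemma would only put $t@(x,y,1-l)$ into the policy set, i.e.\ into \emph{some} restricted assembly sequence of that region; the statement you must prove is that $t$ attaches onto the \emph{specific} production $p$ built by $u$, which is a claim about the glues that $p$'s tiles actually present at that position. Third, and most tellingly, your argument never invokes the hypothesis that $\mathcal{U}$ simulates without mismatches --- but the lemma is simply false for a general simulator (that is the whole point of Theorem~\ref{thm:nomismatches}): a simulator allowed to mismatch can make the layer-$(1-l)$ neighbors present glues incompatible with $t$, so that only some other tile $t'$ fits there.

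The paper's proof is a short, direct use of that hypothesis, with no appeal to Lemma~\ref{lem:bisimilarity}. Since $u$ is consistent, $\Pi(u)$ is a valid assembly sequence of $\mathcal{U}'$ in $\mathbb{Z}^2$ producing $\pi(p)$, which carries the tile $t$ at position $(x,y)$. Because $y>2s+1$, the neighbors of $(x,y,1-l)$ witnessing attachability all lie on layer $1-l$ and project to neighbors of $(x,y)$ in $\pi(p)$. As $\pi(p)$ is a production of $\mathcal{U}'$ and $\mathcal{U}$ never makes mismatches, every positive-strength glue these neighbors present towards $(x,y)$ must equal the glue of $t$ on the corresponding side; attachability gives total strength at least $\tau$, so $t$ itself attaches stably at $(x,y,1-l)$ in $p$, and consistency of $u\,.\,t@(x,y,1-l)$ follows as you observed. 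Your geometric preliminaries (the disjointness $D_l\cap D_{1-l}=\emptyset$ and the fact that both layers sit above the swap edges) are correct but incidental; the missing idea is to project and use the mismatch-free assumption there, rather than to engineer a layer-swap symmetry.
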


\begin{figure}[ht]
\centering
\includegraphics{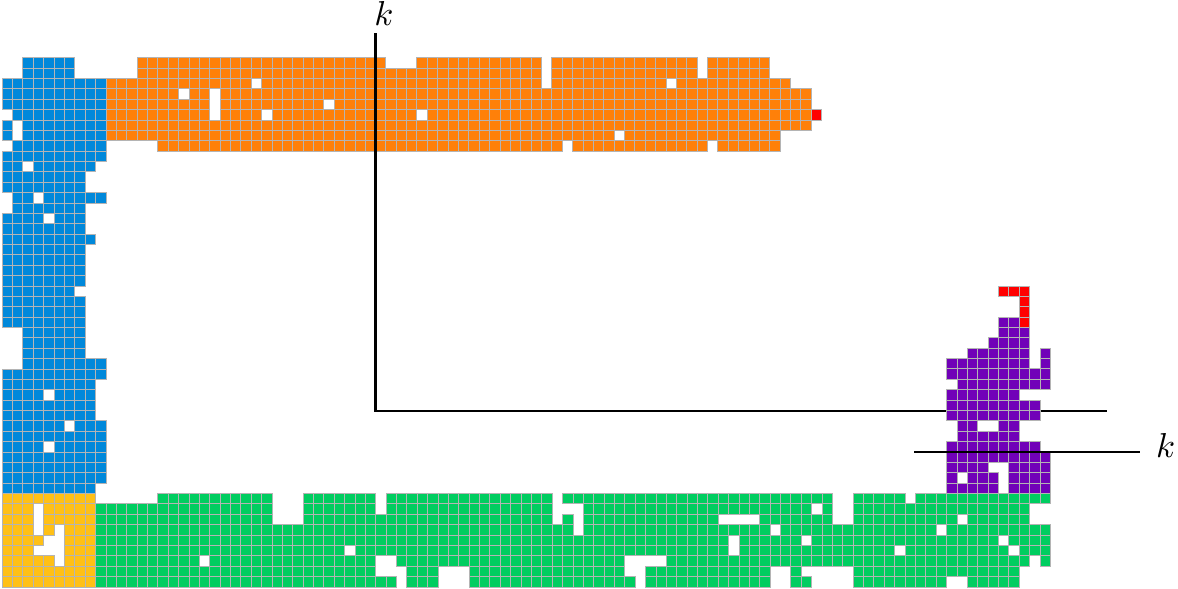}
\caption{A production (in the cutspace) where both arms are short must be able to place at least one tile on each arm. These tiles are drawn in red.}
\label{fig:diplo1}
\end{figure}

\begin{proof}
  Since $u$ is a consistent sequence, $\Pi(u)$ is an assembly sequence of $\mathcal{U}'$ in $\Z^2$, which yields $\pi(p)$ as production. Since $(x,y,1-l)$ is attachable and $y > 2s+1$, there are some neighbors of $(x,y,1-l)$ in $p$ with total glue abutting $(x,y,1-l)$ at least $\tau$, and they are on layer $1-l$. These neighbors map by $\pi$ onto neighbors of $(x,y)$, and since $\pi(p)$ is without mismatches, the glues they have on their sides abutting with $(x,y)$ match those of $t$. Therefore, $t@(x,y,1-l)$ is a possible attachment in $p$.
\end{proof}

We finish the proof by iteratively reconciling assembly sequences until the two arms cross in $\mathbb{Z}^2$, yielding the desired contradiction:

\begin{lemma}
  \label{lem:cross}
  There is an integer $n_0$ and a sequence $(u^n)_{n\leq n_0}$ of
  consistent assembly sequences in the cutspace such that for some $y_h, x_V, l, l'$, $u^{n_0}$ has an attachment at $((s+1)H+1,y_H,l)$
  and one at $(x_V, (s+1)V+1,l')$. Moreover, for $1 \leq n \leq n_0$,
  $|u^n| > |u^{n-1}|$.
\end{lemma}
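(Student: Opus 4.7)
\textbf{The plan} is to construct $(u^n)_{n \leq n_0}$ by induction on $n$. For the base case, I would take $u^0 = \epsilon(u_q)$, where $u_q$ is any assembly sequence of $\mathcal{U}'$ in $\mathbb{Z}^2$ producing the seed $q$; since $\epsilon$ is injective on $\mathbb{Z}^2$, $u^0$ is consistent.

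\textbf{For the inductive step}, assume $u^n$ is consistent and at least one of the two crossing positions is not yet the location of an attachment in $u^n$. Without loss of generality, suppose the lower arm has not yet crossed. I invoke Lemma~\ref{lem:crocut} applied to $u^n$: it guarantees a continuation of $u^n$ containing at least one new attachment on the lower arm; let $t@z$ with $z=(x,y,l)$ be the first such new attachment. If no tile of $u^n$ projects onto $(x,y)$, then setting $u^{n+1}=u^n\cdot(t@z)$ preserves consistency and the induction continues. Otherwise $u^n$ already contains a tile $t'@(x,y,1-l)$; by the mismatch-freeness of $\mathcal{U}$ applied to the valid $\mathbb{Z}^2$ assembly sequence $\Pi(u^n)$, the tile placed by $\mathcal{U}'$ at $(x,y,l)$ must carry the same type as at $(x,y,1-l)$, and Lemma~\ref{lem:reconcile} (whose hypothesis $y>2s+1$ is satisfied because the overlap happens past the edge-swap region) supplies $u^{n+1}=u^n\cdot(t'@(x,y,l))$ as a consistent extension.

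\textbf{Termination} is straightforward: the two crossings are at bounded distance from the tips of the arms in $u^0$ (governed by $H$ and $V$), and Lemma~\ref{lem:crocut} advances the relevant arm by at least one tile at every iteration, so finitely many steps suffice; strict growth $|u^n|>|u^{n-1}|$ follows because each inductive step appends a single attachment. The resulting $n_0$ gives a sequence $u^{n_0}$ containing attachments at both $((s+1)H+1, y_H, l)$ and $(x_V, (s+1)V+1, l')$.

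\textbf{The main obstacle} is precisely the consistency clause of the induction: as soon as both arms start entering the upper region $y>2s+1$ in each other's projection, the tile types on the two layers have to agree. This is the entire content of Lemma~\ref{lem:reconcile}, and it is here that the mismatch-freeness of $\mathcal{U}$ is used in an essential way. Once consistency has been maintained throughout, $\Pi(u^{n_0})$ provides a legitimate assembly sequence of $\mathcal{U}'$ in $\mathbb{Z}^2$ whose projection contains tiles on both sides of the crossing zone, a configuration $\mathcal{T}$ cannot produce in $\mathbb{Z}^2$, delivering the contradiction that finishes the proof of Theorem~\ref{thm:nomismatches}.
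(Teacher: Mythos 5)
Your construction is essentially the paper's own proof: iterate, at each step using Lemma~\ref{lem:crocut} to obtain a new attachment on the arm that has not yet crossed, appending it directly when its projection is fresh, invoking Lemma~\ref{lem:reconcile} to substitute the already-placed type when the projection collides, and terminating because only finitely many attachments are needed before both arms pass the crossing, which also gives $|u^n|>|u^{n-1}|$. One minor remark: your aside that mismatch-freeness forces the tile at $(x,y,l)$ to have the \emph{same type} as the one at $(x,y,1-l)$ is neither needed nor what Lemma~\ref{lem:reconcile} provides --- it only asserts that the already-placed type is a valid attachment on the other layer --- but your actual inductive step uses exactly that weaker statement, just as the paper does.
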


With $u^n$ given by Lemma \ref{lem:cross}, $\Pi(u^{n_0})$ is an assembly sequence of $\mathcal{U}'$ in $Z^2$ with the two arms crossing. Thus, $\mathcal{U}'$ has a production which does not represent any production of $\mathcal{T}$. From this, it follows that $\mathcal{U}$ does not simulate $\mathcal{T}$.

\begin{proof}
  Let $u^0$ be an empty sequence.
  For all $n\geq 0$, $u^{n+1}$ is defined as follows:

\begin{description}
\item[Success.] If $\Pi(u^n)$ is a consistent assembly sequence of $\mathcal{U}'$ with the two arms crossing (i.e., with tiles at positions $((s+1)H+1,y_H, l)$ and $(x_V, (s+1)V+1, l')$ for some $y_H$ and $x_V$), then $n_0 = n$.
\item[Non-conflicting case.] Else, $u^n$ does not represent a terminal assembly because of Lemma \ref{lem:crocut}. Therefore, a new attachment $t@z$ is possible after $u^n$ in the arm that is too short. If $u^n . t@z$ is a consistent sequence, then let $u^{n+1} = u^n. t@z$
\item[Conflicting case.] Lastly, if $u^n . t@z$ is not consistent, then by Lemma \ref{lem:reconcile}, there is a $t'$ such that $u^n. t'@z$ is consistent. In that case, let $u^{n+1} = u^n . t'@z$.
\end{description}

Condition ``Success'' must be reached after a finite number of stpes, since only a finite number of attachments can be made in each arm before a tile is placed at both $((s+1)H+1,y_H, l)$ and $(x_V, (s+1)V+1, l')$. This proves that the desired sequence does exist and concludes the proof.

\begin{figure}[ht]
\centering
\includegraphics{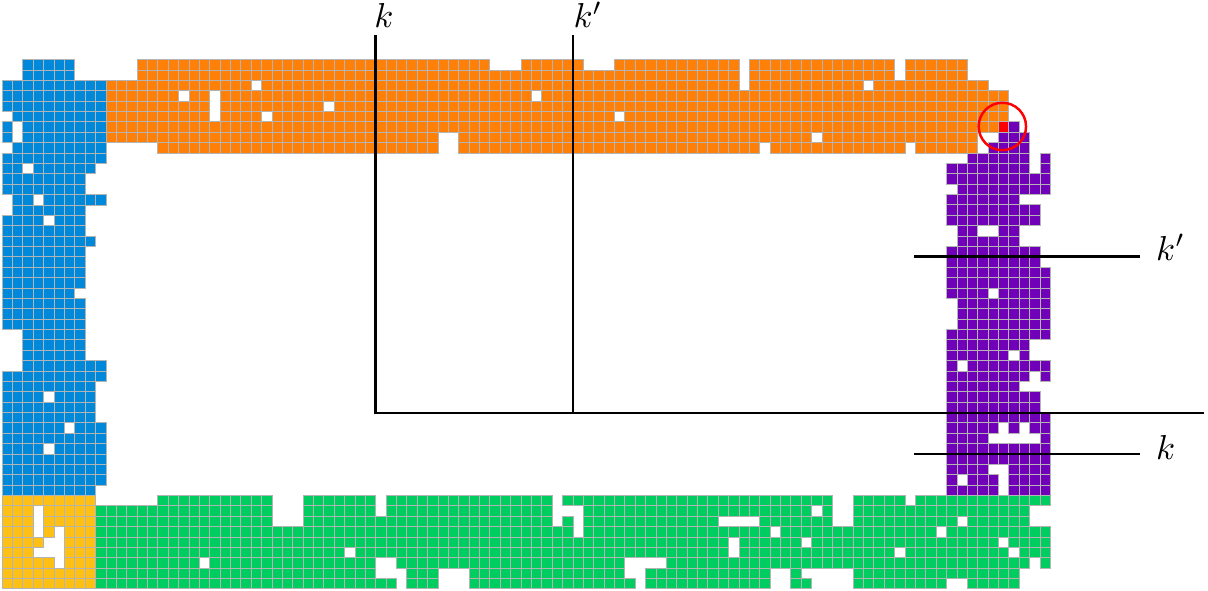}
\caption{The first conflict in the projection of the cut-space into $\mathbb{Z}^2$ is on the tile circled in red.}
\label{fig:diplo0}
\end{figure}
\end{proof}

\end{proof}

\section{Local consistency is strictly stronger than the mismatch-free property}

Local consistency is the property of a tile assembly system that no production has mismatches, and that all tiles are always placed such that there is no \emph{excess of glue}, i.e. such that the sum of their glue strengths is \emph{exactly} equal to the temperature.
This property has been used in \cite{USA} to construct an intrinsically universal tileset for a restricted class.

However, the classical mismatch-free constructions of Turing machine simulations~\cite{RotWin00,SolWin05} are also locally consistent. Moreover, the recent separation result of temperatures 1 and 2~\cite{Meunier-2014}, also applies to separate locally consistent temperature 2 systems from temperature 1. This raises the question of whether local consistency is stronger than the simpler mismatch-free property, which we answer with the following theorem:

\begin{reptheorem}{thm:local-vs-mismatch-free}
  There is a mismatch-free tile assembly system $\mathcal{T}=(T,\sigma,1)$, such that no tile assembly system $\mathcal{U}$ (of any temperature) simulating $\mathcal{T}$ is locally consistent.
\end{reptheorem}
\begin{proof}
    The idea is similar to the proof of Theorem \ref{thm:nomismatches}: we will exhibit a very simple system that does not have mismatches but is not locally consistent, and then repeatedly use Lemma \ref{lem:bisimilarity} on a claimed simulator $\mathcal{U}$ of $\mathcal{T}$ to show that at least one tile, in a production of $\mathcal{U}$, must be placed with too much glue.

    In this proof, $\mathcal{T}$ will be the temperature 1 system with one uniform tile, one copy of which is its seed tile. $\mathcal{T}$ is described on Figure~\ref{fig:locally}.

    \begin{figure}[ht]
      \centering
      \includegraphics[scale=0.15]{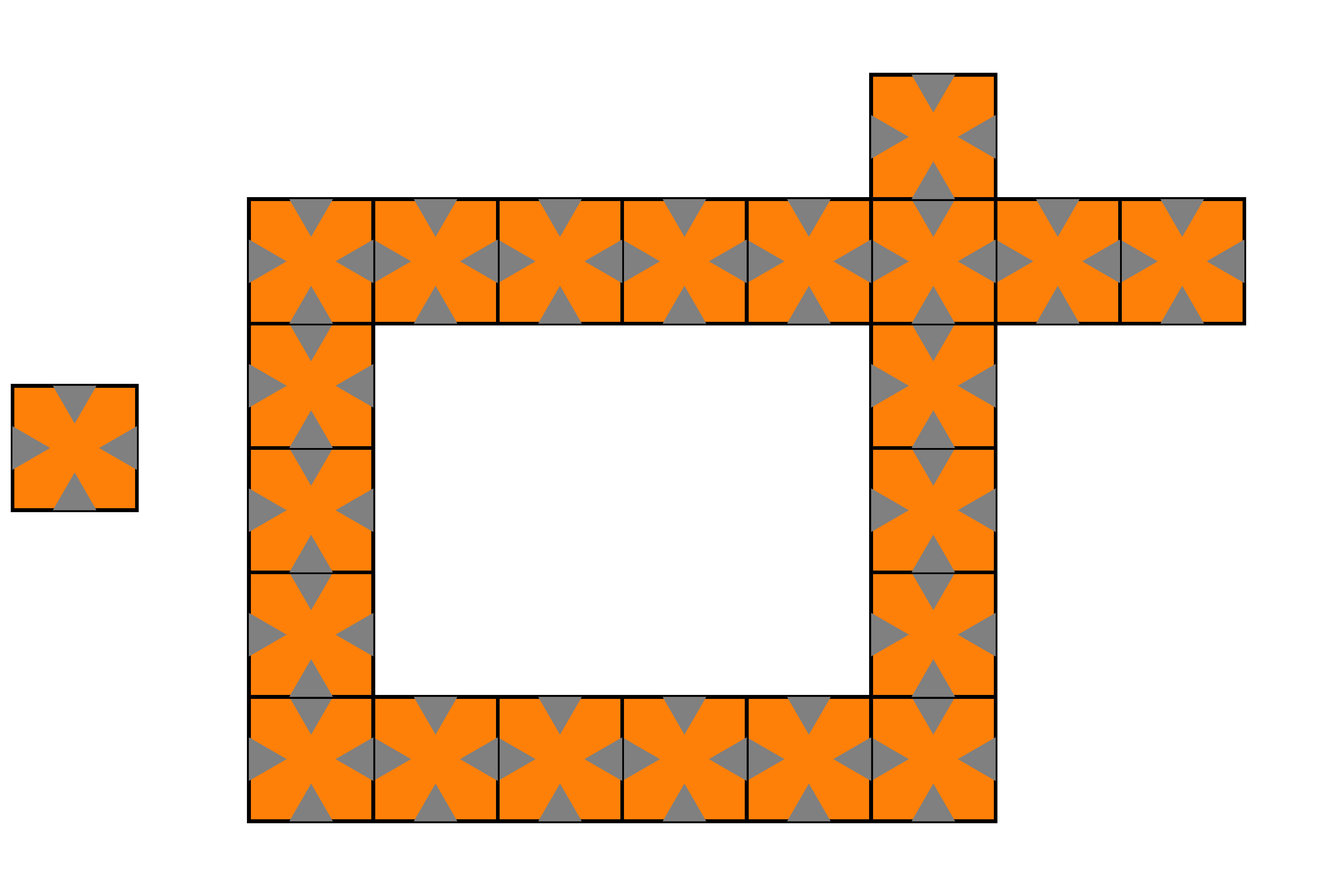}
      \caption{Tile assembly system $\mathcal{T}$, with an example production. All glues have strength $1$. Remark that the tiles at the junctions of two branches can always be placed with an excess of glue.}
      \label{fig:locally}
    \end{figure}

    We will show that any simulator $\mathcal{U}$ for $\mathcal{T}$, at some scaling factor $f$, has either a mismatch or an excess of glue, for at least one tile attachment.
    First, $\mathcal{T}$ can produce an assembly $\alpha$ with two parallel arms, a bottom arm and a top arm, both of length at least $N$, where $N$ is the constant defined as in the proof of Theorem~\ref{thm:nomismatches}, and where the bottom arm turns left after $N$ steps (i.e. a prefix of the construction shown on Figure~\ref{fig:locally}).

    By hypothesis, $\mathcal{U}$ must be able to produce an assembly $\beta$ representing $\alpha$ at scaling factor $f$. Let $\beta_0$ be the largest prefix of $\beta$ such that the two arms do not cross, and let $\gamma_0$ be the assembly equivalent to $\beta_0$ in the cut-space defined in the proof of Theorem~\ref{thm:nomismatches}.
    By Lemma~\ref{lem:bisimilarity}, there are two integers $k$ and $k'$, with $k'\geq k+10 f$ such that diplomatic sets $\mathcal{D}(\mathcal{U},E_k)$ and $\mathcal{D}(\mathcal{U},E_{k'})$ are translations of each other. We can therefore get an assembly sequence $\delta$ in the cutspace where both arms are ``long'', i.e. pass position $f.N$ in their respective layer, by the same argument as in Lemma~\ref{lem:crocut} (namely, because we can find an assembly with ``short'' versions of the arms, that can place at least one more tile because such an assembly cannot represent a terminal assembly).

    \begin{figure}[ht]
      \centering
      \includegraphics[scale=0.3]{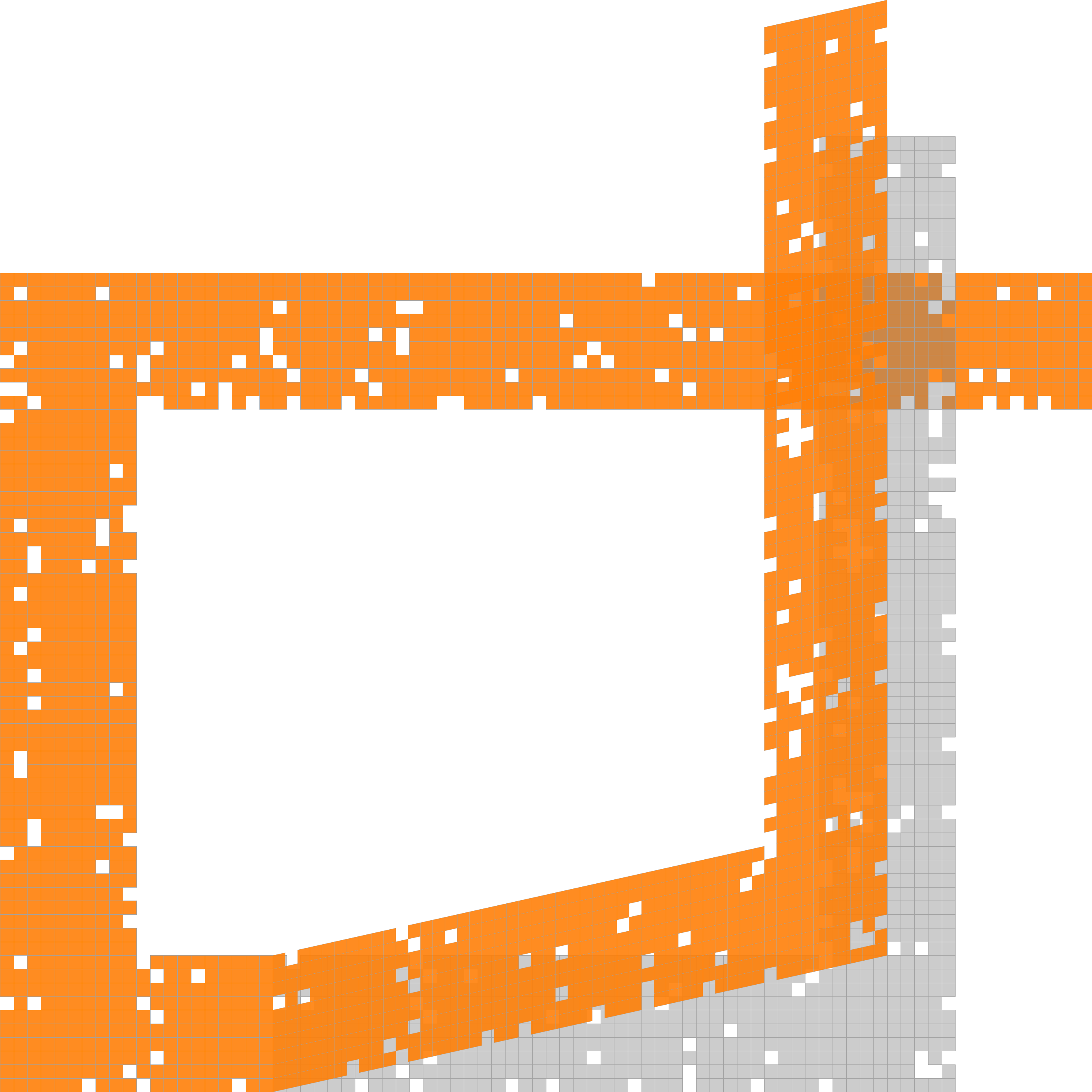}
      \caption{Assembly $\delta$ (in the cutspace)}
      \label{fig:delta}
    \end{figure}

    However, the projection of $\delta$ by $\pi$ hits some position twice, since the bottom and top arms would cross in $\mathbb{Z}^2$. Let thus $s^0$ be the longest prefix of $s$ yielding an assembly $\delta^0$ whose projection by $\pi$ hits each position in $\mathbb{Z}^2$ at most once.
    We claim that at least one tile placed by $s^0$ either has a mismatch with one of its neighbors, or else is placed with an excess of glue. Indeed, the tile it places at step $|s^0|$ must be overlapping some previously placed tile $s_i$. Let $j$ be the last index strictly before $|s^0|$ such that $s_j$ is a tile placement adjacent to $s_{|s^0|}$. This index exists because there must be at least one tile adjacent to $s_{|s^0|}$ before placing it.

    Moreover, $s_j$ must be in the same layer of the cut-space as $s_{|s^0|}$, because $s_{|s^0|}$ is not adjacent to an edge between the two layers. Therefore, $s_j$ is not in the same layer as $s_i$, and so these two tiles can be placed independently.
    Let now $s^1$ be the smallest prefix $s_{0,1,\ldots,\max(i,j)}$ of $s$ containing both tiles. If $\pi(s^1_{\max(i,j)})$ matches all its neighbors, then it can be placed with an excess of glue: indeed, the sum of the glue strengths with its neighbors in the cut-space is at least $\tau$, and it has a non-zero glue strength on its side in common with $\pi(s^1_{\min(i,j)})$ (since $s^1_{\max(i,j)}$ interacts with $s_{|s^0|}$).

\end{proof}

\section{3d Systems without mismatches simulate themselves}

We saw that systems without mismatches do not have the same power as general aTAM systems. We now investigate whether they have an intrinsically universal system, that is, whether there is a tileset on a given grid which can simulate all systems without mismatches on that same grid without itself making mismatches. In this section, we will show that in $\Z^3$, there is such a universal tileset for system without mismatches.%

\begin{reptheorem}{thm:full-hotel}
 There is a three dimensional tileset $U$ such that for every 3d tile assembly system $T$ without mismatches on $\Z^3$, there is a seed $s_U(T)$ such that tile assembly system $(U,s_U(T),2)$ (defined on $\Z^3$) simulates $T$ without making mismatches.
\end{reptheorem}

\subsection{Simulating 2d with 3d}

The full construction for Theorem \ref{thm:full-hotel} is somehow hard to visualize, as three-dimensional constructions tend to be. We will first prove a restricted version, in which the \emph{simulated} system is two-dimensional. The gist of both proofs is essentially the same.

\begin{theorem}\label{thm:2d-hotel}
There is a 3D tileset $U$ without mismatches that is intrinsically universal at temperature 2 for the class of 2D mismatch-free tile assembly systems.
\end{theorem}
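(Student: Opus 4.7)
The plan is to build $U$ through a standard macrotile framework, parameterized by a program encoded in the seed, but with a structure that exploits the extra $z$-dimension to eliminate the two sources of mismatches in usual intrinsically universal constructions: \emph{wire crossings} and \emph{input-set arbitration}. Given a mismatch-free 2D TAS $T=(T_T,\sigma_T,\tau_T)$, I would represent each tile of $T$ by an $m\times m\times h$ macrotile, with $m$ and $h$ chosen large enough to encode any tile type of $T_T$ in local binary labels. The $z=0$ floor of each macrotile carries the ``interface glues'' seen by neighbors, and the seed $s_U(T)$ places pre-built macrotiles for $\sigma_T$ with their output glues already written on their $z=0$ boundaries.

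For a newly-starting macrotile, growth proceeds in four phases. First, a scaffolding phase triggered by any arriving boundary glue fills the interior with a tile-type-agnostic skeleton, laying down a central choice cell and four dedicated routing tracks, one per cardinal side, each living in its own $z$-layer. Second, each boundary glue is copied up its side's track and conveyed to the choice cell; because the tracks sit at disjoint $z$-levels the wires never cross and no competition is needed to disambiguate them, which is the essential use of the third dimension. Third, at the choice cell, $U$ contains a family of catalog tiles, one per simulated tile type $t\in T_T$, each presenting on its input-facing sides the glues of $t$ that the routing may deliver; a catalog tile can attach exactly when the inputs already routed to the cell agree with $t$'s input glues, and only one such catalog tile will actually attach, non-deterministically committing the macrotile to $t$. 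Fourth, once committed, the chosen catalog tile triggers output propagation, which writes $t$'s output glues back down to the $z=0$ boundary on all sides of the macrotile that were not yet filled by arriving inputs.

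The verification then checks the three conditions of simulation (equivalent productions, $\mathcal{T}\dashv_R\mathcal{S}$, and $\mathcal{S}\models_R\mathcal{T}$) together with the mismatch-free invariant of $U$. The crucial role of the mismatch-free hypothesis on $T$ is that a macrotile may commit to a simulated tile $t$ on the basis of only a partial input (just the neighbors that have arrived so far); any later-arriving neighbors must, by mismatch-freeness of $T$, present glues agreeing with $t$ on the remaining sides, so the output glues already written on the macrotile's boundary will automatically match them. The mismatch-free property of $U$ itself is enforced by construction: the scaffold is designed with no internal mismatches, the routing tracks in disjoint $z$-layers never generate adjacent non-matching glues, the choice cell is resolved only by position occupancy (losing catalog tiles simply fail to attach, not via a mismatch), and output propagation writes on regions the scaffold left empty.

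The hard part will be designing the choice cell and its surrounding input-gating logic so that the set of catalog tiles that can attach matches exactly the set of tiles of $T_T$ legally placeable at the simulated position, for every possible subset of arriving inputs and every distribution of simulated glue strengths summing to $\tau_T$. A secondary subtlety is ensuring that the scaffolding is truly tile-type-agnostic and can coexist with any choice made later at the choice cell; this is where the extra dimension pays off once more, because the output glues need only be compatible with the scaffold along a single $z$-interface that can be designed uniformly across the whole universal tileset $U$.
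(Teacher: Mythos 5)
Your high-level strategy is genuinely close to the paper's: use the third dimension to route each side's information on its own layer so wires never cross, funnel the non-deterministic commitment of the macrotile to a single place resolved by occupancy rather than by a mismatch race, and invoke mismatch-freeness of the \emph{simulated} system $T$ to guarantee that inputs arriving after commitment agree with the outputs already written. However, there is a concrete gap at the heart of your ``choice cell'': you populate it with ``catalog tiles, one per simulated tile type $t\in T_T$, each presenting on its input-facing sides the glues of $t$.'' In intrinsic universality (Definition~\ref{def:iu-specific-temp}) the tileset $U$ is fixed once and for all; only the seed may depend on $\mathcal{T}$. Since $T_T$ and its glue alphabet are arbitrary, $U$ cannot contain a tile per type of $T_T$, and a single tile of the fixed $U$ cannot ``read'' an input glue that is necessarily spread out as a binary encoding over many positions of the macrotile boundary. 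So the attachment condition ``a catalog tile attaches exactly when the routed inputs agree with $t$'s input glues'' is not implementable as one atomic attachment.

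Once you repair this in the standard way (encode $T$ in the seed and decide attachability by a lookup-table/Turing computation over the encoded inputs), the commitment is no longer atomic, and the concurrency problem you were trying to dodge reappears: computations triggered by different subsets of asynchronously arriving inputs run in the same macrotile and may non-deterministically select different tile types, producing inconsistent (mismatching) output encodings --- exactly the scenario of Figure~\ref{fig:diff}. The paper's construction resolves this with two ingredients you would need analogues of: (i) the non-deterministic choice is made by an \emph{input-agnostic} probe (it carries only $k$ and $T$, not the triggering glue) sent to one fixed location of a dedicated choice layer, so concurrent probes from different sides lay down identical tiles and the $\log|T|$ chosen bits are fixed once by occupancy; and (ii) the top layer contains a \emph{separate} lookup gadget for each of the $16$ possible combinations of input sides (this also absorbs your ``hard part'' about which glue-strength subsets reach $\tau_T$), so concurrent lookups never interfere, and any two gadgets that do produce output write equal values because the output type is already determined by the shared chosen bits --- which is what lets the output wires grow bidirectionally without mismatches. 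Without these (or equivalent) mechanisms, your single-choice-cell design does not yield a fixed, mismatch-free universal $U$.
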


\begin{proof}
In order to prove this, we rely on standard tools such as lookup tables and Turing machine simulations, as described in other constructions of intrinsically universal tilesets~\cite{IUSA,USA}.

One particular difficulty is to not create mismatches, which are used in crucial parts of the cited constructions. Although this is relatively straightforward when simulating \emph{deterministic} systems without mismatches, the main issue is to deal with non-determinism: indeed, if two different tiles can be placed at the same position by different cooperation, as shown on Figure~\ref{fig:diff}, the cut-space technique and the bisimilarity lemma could be used, like in the proof of Theorem~\ref{thm:nomismatches}, to reconcile inconsistent encodings of the north glue, and produce a ``cancerous'' glue.

    \newcommand\de{0.15}
    \newcommand\dw{0.15}
    \newcommand\hh{0.3}
\begin{figure}[ht]
  \centering
    \hfill
  \begin{tikzpicture}[scale=0.4]
    \draw(0,0) rectangle (4,4);
    \draw[draw=none, fill=black](2-\de,0)rectangle(2-\de+\dw,\hh);
    \draw[draw=none, fill=black](2+\de,0)rectangle(2+\de+\dw,\hh);
    \draw(2,\hh)node[anchor=south]{$a$};

    \draw[draw=none, fill=black](0,2-\dw/2)rectangle(\hh,2+\dw/2);
    \draw(\hh,2)node[anchor=west]{$b$};

    \draw[draw=none, fill=black](4-\hh,2-\dw/2)rectangle(4,2+\dw/2);
    \draw(4-\hh,2)node[anchor=east]{$d$};

    \draw[draw=none, fill=black](2-\de,4)rectangle(2-\de+\dw,4-\hh);
    \draw[draw=none, fill=black](2+\de,4)rectangle(2+\de+\dw,4-\hh);

    \draw(2,4-\hh)node[anchor=north]{$x$};
  \end{tikzpicture}
  \hfill
  \begin{tikzpicture}[scale=0.4]
    \draw(0,0) rectangle (4,4);
    \draw[draw=none, fill=black](2-\de,0)rectangle(2-\de+\dw,\hh);
    \draw[draw=none, fill=black](2+\de,0)rectangle(2+\de+\dw,\hh);
    \draw(2,\hh)node[anchor=south]{$a$};

    \draw[draw=none, fill=black](0,2-\dw/2)rectangle(\hh,2+\dw/2);
    \draw(\hh,2)node[anchor=west]{$b$};

    \draw[draw=none, fill=black](4-\hh,2-\dw/2)rectangle(4,2+\dw/2);
    \draw(4-\hh,2)node[anchor=east]{$d$};

    \draw[draw=none, fill=black](2-\de,4)rectangle(2-\de+\dw,4-\hh);
    \draw[draw=none, fill=black](2+\de,4)rectangle(2+\de+\dw,4-\hh);

    \draw(2,4-\hh)node[anchor=north]{$y$};
  \end{tikzpicture}
  \hfill
  \begin{tikzpicture}[scale=0.4]
    \draw[draw=none,fill=colb!15!white](0,0)rectangle(4,4);
    \draw(-1,4)--(0,4)--(0,-1);
    \draw(5,4)--(4,4)--(4,-1);
    \draw(-1,0)--(5,0);

    \draw[draw=none, fill=black](2-\de,0)rectangle(2-\de+\dw,-\hh);
    \draw[draw=none, fill=black](2+\de,0)rectangle(2+\de+\dw,-\hh);
    \draw(2,-\hh)node[anchor=north]{$a$};
    \draw[draw=none, fill=black](0,2-\dw/2)rectangle(-\hh,2+\dw/2);
    \draw(-\hh,2)node[anchor=east]{$b$};

    \draw[draw=none, fill=black](4+\hh,2-\dw/2)rectangle(4,2+\dw/2);
    \draw(4+\hh,2)node[anchor=west]{$d$};
  \end{tikzpicture}
  \hfill\vspace*{0pt}

  \caption{Both tiles on the right-hand side can attach to the empty location in blue on the assembly on the left-hand side, in different ways: for instance, $b$ and $d$ could cooperate and place the first tile, outputting an $x$ on the north side, or $a$ could place the second tile, outputting a $y$. The problem, if we stay in $\mathbb{Z}^2$, is that techniques such as those used in the proof of Theorem~\ref{thm:nomismatches} seem to allow us to ``mix' these two sequences in one, outputting an inconsistent encoding of a mix of $x$ and $y$ on the north side.}
  \label{fig:diff}
\end{figure}
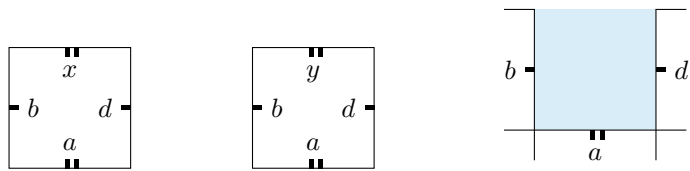

The main idea of this construction is to allow different assembly sequences to take place at the same time, but at different positions. The key to doing this is to embed the construction into $\mathbb{Z}^3$. In our construction, we use four layers: one to compute output sides, using different copies of the computational gadget for all possible input side combinations. One to choose (non-deterministically) the particular combination that will be used, preventing others to interfere with that choice, and two other layers to propagate information between all sides.

\subsubsection{Construction of a macrotile, step 1: gathering information}

The global layout of a macrotile is pictured on Figure~\ref{fig:hotel-layout-2d}: the lowest layer is the leftmost one, the highest is the rightmost one.
The construction of a macrotile \emph{starts at one of the intermediate layers}, when a neighboring tile places the encoding of $(k,g,T)$ on one of these two layers, where $k$ is the size of the gadgets (from which the scale factor is calculated), $g$ is the glue on the neighboring tile, and $T$ is the simulated tileset.

We take as an example the assembly of the first tile of Figure~\ref{fig:diff}, from the east: glue $d$ comes first, as an encoding of $(k,d,T)$ on the second layer (the H-wireboard layer). This encoding sends a ``probe'' containing only an encoding of $k$ and $T$, to the bottom layer. Then, a one-way wire is built, that goes around the bottom layer. This wire actually uses a counter to count up to the scale factor and turn at the appropriate places, until reaching the south side. The choice of the south as the preferred side is actually independent of where this probe started: all decisions are taken on the south side.

Since this counter does not depend on the types of glues that might come later around this supertile, all these glues will send the same probe; we can therefore design the probes, using appropriate counters, so that different probes grown concurrently have no mismatches when this happens.

When it reaches the south side, this probe  non-deterministically chooses $\log |T|$ bits. It then sends two different pieces of information to the two intermediate layers: (1) an ``unblocking'' wire consisting of only one tile, that instructs all neighboring glues to send information to the highest layer (where lookups are performed), and (2) the $\log |T|$ bits chosen.

Using these two pieces of information, all the gadgets in the highest layer that deal with east glues start to compute outputs: thanks to the $\log |T|$ chosen bits, this choice can be done uniformly among all gadgets.

Then, \emph{if another input arrives}, say, from the west, it will start the same probe to the bottom layer, where the probe and the already grown path agree, by our construction. In parallel to that, since the unblocking signal has already been sent by the bottom layer, the wires on the intermediate layers can start to grow, sending information to the gadgets on the highest layer, that can process the west side. These gadgets can finally start to compute potential outputs, using the encoding of $T$ and the $\log |T|$ ``non-deterministic'' bits. As detailed in the next section, all these gadgets compute different subsets of the input sides: therefore, only gadgets expecting only the east side will be able to start their computation, and, if $T$ can actually place a tile at the current position, will output the corresponding glues.

Clearly, the two intermediate layers can transmit all this information, both by having ``holes'' (to let the $\log |T|$ ``non-deterministic bits'' go through the layers), and by forming ``wires'' to connect glues to the highest layer, according to the routing scheme pictured by the full lines of Figure~\ref{fig:routeur}, and detailed later in the next section.

\subsubsection{Construction of a macrotile, step 2: computing the output}

The main purposes of the two intermediate layers, called \emph{H-wireboard} and \emph{V-wireboard} on Figure~\ref{fig:hotel-layout-2d} is to duplicate the information of the neighboring glues and of the $\log |T|$ non-deterministic bits chosen on the bottom layer, and to bring them to the gadgets on the top layer.

We describe these gadgets in greater detail now: they are actually made of standard lookup tables, as used for instance in~\cite{IUSA}; these constructions are described in the literature~\cite{Winfree98simulationsof,SolWin05} and do not use mismatches. We only need to arrange them properly in the space, so that neighboring gadgets do not interfere with each other. More precisely, we use the following routing scheme: if the supertile in construction has a supertile on its east side, that supertile sends two copies of the required information ($T$, the value of the glue, and the scale factor), drawn in purple on Figure~\ref{fig:routeur}, to the west. Similarly, when the supertile in construction has a tile on its west side, that supertile tile sends two copies of the same information to the east, but at different positions, shown in green on Figure~\ref{fig:routeur}.
A similar scheme is used to route glue information from the north and south sides (shown in blue and orange on Figure~\ref{fig:routeur}).

This information is then routed on their respective rows and colums, until it reaches gadgets that can handle it. On Figure~\ref{fig:routeur}, we can see that all combinations of input sides are indeed handled by different gadgets.

\begin{figure}[ht]
\centering
\begin{tikzpicture}[scale=0.2]
  \draw(0,0)rectangle(24,24);
  \begin{scope}[xshift=4cm,yshift=4cm]
    \draw(0,0)rectangle(16,16);
    \foreach \i in{4,8,12}{\draw[gray,very thin](0,\i)--(16,\i);}
    \foreach \i in{4,8,12}{\draw[gray,very thin](\i,0)--(\i,16);}
    \draw[cola,->](3,-4)--(3,16);
    \draw[cola,->](7,-4)--(7,16);
    \draw[colb,->](1,20)--(1,0);
    \draw[colb,->](9,20)--(9,0);

    \draw[colc,->](-4,13)--(16,13);
    \draw[colc,->](-4,5)--(16,5);

    \draw[cole,->](20,15)--(0,15);
    \draw[cole,->](20,11)--(0,11);

    \draw[cola,dashed](6.5,0)--(6.5,18);
    \draw[cola,dashed,->](14.5,0)--(14.5,18)--(3,18)--(3,20);
    \draw[cola,->,dashed](7,18)--(7,20);
    \draw[colb,dashed](9.5,16)--(9.5,-2);
    \draw[colb,->,dashed](13.5,16)--(13.5,-2)--(1,-2)--(1,-4);
    \draw[colb,->,dashed](9,-2)--(9,-4);
    \draw[cole,dashed](16,10.5)--(-2,10.5);
    \draw[cole,->,dashed](16,2.5)--(-2,2.5)--(-2,15)--(-4,15);
    \draw[cole,->,dashed](-2,11)--(-4,11);
    \draw[colc,->,dashed](0,1.5)--(18,1.5)--(18,13)--(20,13);
    \draw[colc,dashed](0,5.5)--(18,5.5);
    \draw[colc,->,dashed](18,5)--(20,5);
  \end{scope}
\end{tikzpicture}
\caption{Routing scheme: the outer rectangle is the macrotile, the inner one is the gadget area, divided into 16 gadgets. The inputs are routed by the full lines, and the outputs by the dashed line. On this figure, the two layers of routing (horizontal and vertical) are shown together. In the real construction, vertical and horizontal wires are exclusively in the V-wireboard and in the H-wireboard, respectively.}
\label{fig:routeur}
\end{figure}
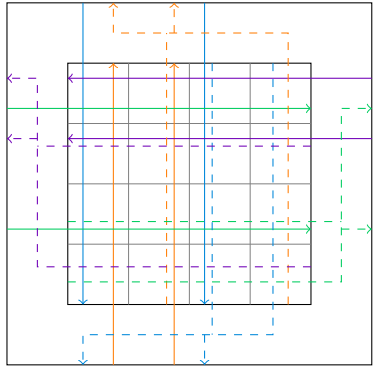

\subsubsection{Construction of a macrotile, step 3: outputting glues}

After computation takes place in the gadgets, it is time to construct glues on the remaining sides. To do so, we use the dashed lines shown on Figure~\ref{fig:routeur}. The information is completely output in each of the gadgets, routed by these lines to a ``peripheral'' area around the gadgets area, duplicated into two copies, and sent to the correct positions to start the next supertile.

The only issue that needs to be handled here, is the case where two different gadgets are activated, and send their result to the same wire (one of the wires in dashed line on Figure~\ref{fig:routeur}).
Since the $\log |T|$ bits chosen at stage 1 fixed the type of the supertile that will be placed (among the set of tiles that can attach there), the values sent on these wires are always equal: we can therefore simply use ``wires'' that can grow in both directions on these output rows, and remain consistent with other values that might be output by other gadgets.

Finally, because we are simulating a mismatch-free system, the glues output on abutting sides of two neighboring tiles are always equal, so we can design their encoding so that there is no mismatch at these positions.

\begin{figure}
  \centering
  \includegraphics[width=\textwidth]{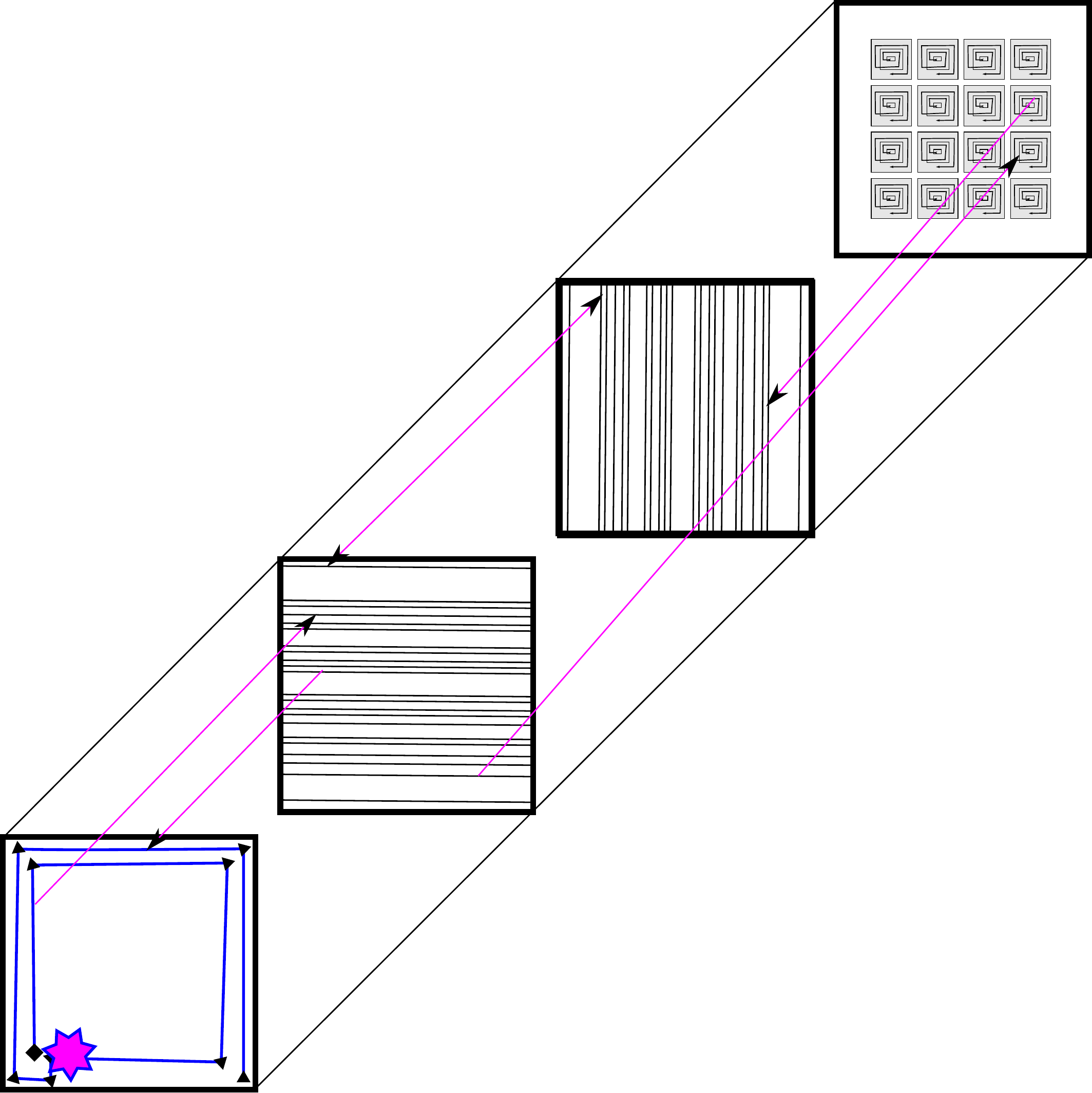}
  \caption{Layout of a macro-tile of $U$ simulating $T$. The top layer has 16 \emph{loci}, one for each possible combination of input sides, the middle layers are the wireboards, with the \emph{H-wireboard} carrying the E and W inputs and outputs between the edges and the loci, and the bottom layer, the \emph{V-wireboard} carrying the N and S inputs / outputs. Finally, the bottom layer is responsible for making a non-deterministic choice for the macrotile and distributing it to the system wires on the wireboards. This choice takes place in the starred location, after all input sides have had a chance to trigger the choice layer. The magenta arrows represent the inter-layer communication.}
  \label{fig:hotel-layout-2d}
\end{figure}

\end{proof}

\subsection{Simulating 3d with 3d}

We are now ready to give the proof of Theorem \ref{thm:full-hotel}.
\begin{reptheorem}{thm:full-hotel}
 There is a three dimensional tileset $U$ such that for every 3d tile assembly system $T$ without mismatches on $\Z^3$, there is a seed $s_U(T)$ such that tile assembly system $(U,s_U(T),2)$ (defined on $\Z^3$) simulates $T$ without making mismatches.
\end{reptheorem}

\begin{proof}
 Essentialy, the construction is the same, it just needs four times more instances of the gadget on the same plane, and, as a consequence, some changes in the wiring to accept input from the nadir and zenith. A scheme of this construction is shown on Figure~\ref{fig:hotel:3d}.

 To route the input from the nadir and zenith to the wireboards, we add one more instance of the routing scheme shown on Figure~\ref{fig:routeur}. These new wires need some space to reach the bottom of the gadgets: for that, we use six wires (three inputs, three outputs) on each side, instead of four in the previous construction.

\begin{figure}
  \centering
  \includegraphics[width=.4\textwidth]{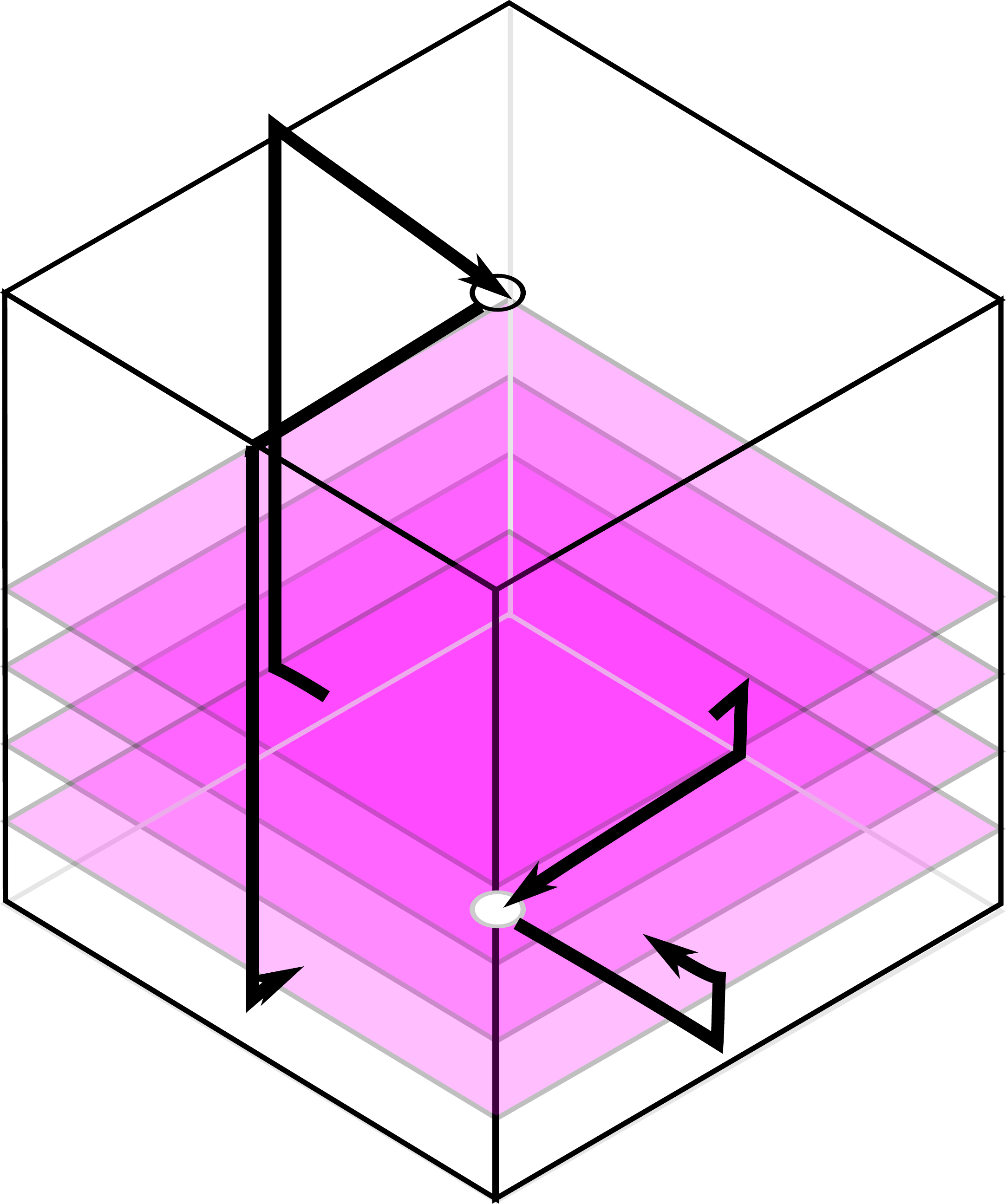}
  \caption{Layout of a macrotile; the shaded planes are the four planes corresponding to the previous simulation the wiring from the edges of these planes to the border of the macrotile is not represented, for clarity's sake.}
  \label{fig:hotel:3d}
\end{figure}

\end{proof}

\section{Acknowledgements}

The authors wish to thank Damien Woods for friendly and invaluable discussions, ideas, comments, improvements and support before and while writing this paper.

\end{document}